\documentclass[11pt,a4paper]{article}
\usepackage[margin=1in]{geometry}
\usepackage{amsmath,amssymb,amsthm}
\usepackage{graphicx}
\usepackage{float}
\usepackage[colorlinks=true,linkcolor=blue,citecolor=blue]{hyperref}
\usepackage{booktabs}
\usepackage{caption}
\newtheorem{proposition}{Proposition}

\theoremstyle{definition}

\theoremstyle{remark}
\newtheorem{remark}{Remark}

\newcommand{\Var}{\mathrm{Var}}
\newcommand{\Cov}{\mathrm{Cov}}
\newcommand{\Ex}{\mathbb{E}}
\renewcommand{\Pr}{\mathbb{P}}
\newcommand{\Hb}{\mathcal{H}_{\mathrm{blind}}}

\title{Blind directions of linear response and the limits of
finite-perturbation bounds on efficiency fluctuations}
\author{%
Badr Farih\\[2pt]
\normalsize \texttt{badr.farih1@usmba.ac.ma}%
}
\date{}

\begin{document}
\maketitle
\vspace{-2.2em}

\begin{center}\begin{minipage}{0.88\textwidth}
\noindent\textbf{Abstract.}
Thermodynamic uncertainty relations bound the fluctuations of time-antisymmetric
currents. Stochastic efficiency is not one: a ratio of two odd quantities, it is
even under path reversal, and for the conventional ratio $-W/Q_{\rm h}$ no moment
of any order exists. We work with the exergetic ratio $\eta=W/(W+T_0S)$, whose
moments do exist, and ask how much of its variance a finite perturbation of the
dynamics can certify when linear response cannot. The question is specific to
nonlinear observables: we show that for every current, and every other
observable linear in transition counts and residence times, the
multiparameter Cram\'er--Rao bound over rate and site perturbations already
equals the variance, so finite perturbations can add nothing. For $\eta$ they
can. The Cram\'er--Rao bound vanishes identically on a hyperplane of perturbation
directions, while the multi-point bound of Barankin, applied to tilted path
measures, stays strictly positive there and recovers $75\%$ of $\Var(\eta)$ in a
three-state motor model where every linear-response bound is zero to machine
precision. The Gram matrix of tilted Markov-jump path measures has a closed form
as a Feynman--Kac matrix exponential whose potential is the Hellinger integrand
of the jump intensities, and the cost of a perturbation is its $\chi^{2}$
divergence, which no dissipation bound controls. On two models the bound
captures $1.25$--$1.57$ times the best linear-response bound, and the gain
persists, at about $1.3$, when the motor is embedded in networks of up to
sixteen states and a $33$-dimensional perturbation space.
For an Ornstein--Uhlenbeck process the construction is exactly solvable: the
optimal tilt scales as $\mathcal T^{-1/2}$, three test points come within
$3\times10^{-5}$ of the supremum, and the recovered fraction peaks at an intermediate window. Two limits
are established rather than assumed. The gain is a short-window effect that
disappears as the counts become Gaussian. And the bound cannot be turned into
inference: coarse measurements of mean efficiency bound $\chi^{2}$ only from
below, so the one form an experiment could evaluate recovers $11\%$ of the
variance at the shortest window and $0.05\%$ at the longest, with nothing
measurable to say which. The construction is a computational instrument for a
model in hand, not an uncertainty relation.

\vspace{0.6em}
\noindent\textbf{keywords:} stochastic thermodynamics; thermodynamic uncertainty
relations; efficiency fluctuations; Barankin bound; nonlinear response;
Markov jump processes.
\end{minipage}\end{center}

\section{Introduction}\label{sec:intro}
The thermodynamic uncertainty relation states that the precision of a
time-integrated current is limited by the entropy production that pays for
it~\cite{Barato2015,Gingrich2016,Horowitz2017}. It is one of the most useful
results of stochastic thermodynamics~\cite{Seifert2012}, and its usefulness has a boundary that is
rarely made explicit: it is a statement about \emph{currents}, and it is derived
under the assumption that the observable of interest is antisymmetric under time
reversal. Indo and Hasegawa put the point plainly: the antisymmetry
$\phi(\Gamma^{\dagger})=-\phi(\Gamma)$ is ``often implicitly assumed'' and ``has
also become a major constraint, restricting the applicability of the
[relations] to a specific class of physical quantities''~\cite{Indo2025}. Their
own generalisation relaxes it only to $\phi(\Gamma^{\dagger})\phi(\Gamma)\le0$.

The efficiency of a machine is outside that class, for two independent reasons.

The first is symmetry. Efficiency is a ratio of two quantities that are both odd
under path reversal, so it is \emph{even}:
$\eta(\Gamma^{\dagger})=\eta(\Gamma)$. A positive even observable satisfies
$\phi(\Gamma^{\dagger})\phi(\Gamma)=\phi^{2}>0$ and therefore fails even the
relaxed condition of~\cite{Indo2025}. The fluctuation-theorem pairing, applied
to such an observable, returns an identity rather than an inequality.

The second concerns the conventional figure of merit and is more serious. For
$\tilde\eta=-W/Q_{\rm h}$ the stochastic input heat in the denominator
fluctuates through zero, the distribution acquires $\rho(\tilde\eta)\sim
\tilde\eta^{-2}$ tails, and \emph{every} moment diverges~\cite{Holubec2022};
Polettini, Verley and Esposito state that the distribution affords ``no moments
of any order, so that there is no average efficiency and mean-square
error''~\cite{Polettini2015}. This is why the efficiency-fluctuation literature
works with large-deviation functions~\cite{Verley2014} rather than with moments,
and why a variance bound for $\tilde\eta$ has no meaning.

Both obstructions are visible in the shape of the literature. Response bounds
for currents are mature~\cite{Aslyamov2025,Liu2025,Kwon2025}; efficiency is never
the responding observable, and is instead bounded indirectly through its
constituent currents~\cite{Pietzonka2016}.

We take the second obstruction to be the more fundamental, and remove it by
choice of observable. The exergetic (second-law) efficiency
\begin{equation}
\eta=\frac{W}{W+T_0S},
\label{eq:eta}
\end{equation}
work delivered over work delivered plus exergy destroyed, removes it. This is
the standard second-law, or exergetic, efficiency of engineering thermodynamics,
in which the Gouy--Stodola theorem identifies $T_0S$ as the exergy irreversibly
destroyed by the process~\cite{Bejan2016}. We do not propose a new figure of
merit; we apply a classical one in a regime where it has not been used.

The only change is in the denominator, which is the exergy \emph{consumed}
rather than the heat \emph{supplied}. Consumed exergy vanishes only together with
the work delivered, so there is no interior pole. In the jump models of Section~\ref{sec:models} the
consumed exergy is $\Delta\mu$ times an integer net fuel count, so on the event
where it is positive it is bounded below by $\Delta\mu$, and every moment of
$\eta$ exists. Whenever $\Sigma\ge0$ one has $\eta\le1$; $\eta<0$ remains
possible on rare trajectories along which the machine runs backwards. It is also
the ratio that a design or costing exercise wants, since it charges a device with
what it destroyed rather than with what it was handed.

What the energy budget alone determines about the \emph{mean} of \eqref{eq:eta}
is a moment problem of a different kind, treated separately by the present
author~\cite{Farih2026}, and is not our subject. We take \eqref{eq:eta} as given
and ask about its \emph{variance}, with a dynamical model in hand. The whole
paper answers one question: \emph{when an observable is invisible to linear
response, how much of its fluctuation can a finite perturbation of the dynamics
still certify, and at what price?}

\subsection{What this paper does}\label{sec:does}
Perturb the transition rates of a Markov jump process, $w_{ij}\to w_{ij}
e^{u_{ij}/2}$, and let $M_u$ be the resulting Radon--Nikodym derivative on path
space. Cauchy--Schwarz gives, for any observable $O\in L^{2}$,
\begin{equation}
\Var_0(O)\;\ge\;\frac{\Cov_0(O,M_u)^{2}}{\Var_0(M_u)},
\label{eq:hcr}
\end{equation}
which is the Hammersley--Chapman--Robbins bound~\cite{Hammersley1950,Chapman1951};
its $u\to0$ limit along a direction is the Cram\'er--Rao bound, and the supremum
over infinitesimal directions is the multiparameter Cram\'er--Rao bound. Both
ends of this family are in use in stochastic thermodynamics: Hasegawa and Van
Vu use the single-point form with the Pearson divergence~\cite{Hasegawa2019PRE},
and Dechant uses the derivative limit with the Fisher information
matrix~\cite{Dechant2019}. The interpolation between them is not in use.

The interpolation is the Barankin bound~\cite{Barankin1949}. Its content is a
projection: take $m$ perturbations at once and project $O$ onto the span of
$\{M_{u_1},\dots,M_{u_m}\}$ in $L^{2}$. We use it, and the paper is organised
around what becomes possible when one does.

The answer to the question posed above comes in five parts.

\emph{Which observables the question concerns.} Section~\ref{sec:setup} sets up the bound and
Proposition~\ref{prop:currents} (Section~\ref{sec:currents}) shows that for currents it is
empty: every observable linear in transition counts and residence times is
already captured exactly by the multiparameter Cram\'er--Rao bound once site
perturbations are admitted alongside rate perturbations. Finite perturbations
can help only for observables that are nonlinear functionals of the trajectory
record, and efficiency, a ratio of two currents, is among the simplest of
thermodynamic interest. This is why the paper is about $\eta$ and not about a
current.

\emph{The price.} Section~\ref{sec:dissipation} shows that the price of a finite perturbation cannot
be paid in dissipation: no bound $\chi^{2}\le C\langle\Sigma\rangle_0$ with $C$
independent of the perturbation exists. Section~\ref{sec:gram} shows that the price actually
charged, the Gram matrix of $\chi^{2}$-type overlaps, is computable in closed
form.

\emph{What is gained where linear response is blind.} $\chi_1(u)=\partial_\theta
\langle O\rangle_{\theta u}|_{0}$ is \emph{linear} in $u$, so the set on which it
vanishes is a hyperplane, and on that hyperplane every linear-response bound
(Cram\'er--Rao, and every uncertainty relation derived from it) is identically
zero. Section~\ref{sec:blind} shows that the finite-perturbation bound is strictly positive
there; Sections~\ref{sec:design}--\ref{sec:results} show how to choose the perturbations and that the bound
recovers three quarters of $\Var(\eta)$ in the models studied, a fraction and
a gain over linear response that do not degrade on networks of up to sixteen
states (Section~\ref{sec:largeN}).

\emph{What cannot be gained.} Section~\ref{sec:limits} shows that the gain is a short-window
effect that disappears as the counts become Gaussian. Section~\ref{sec:measure} shows that the
bound cannot be converted into a statement about measured data: coarse
measurements bound $\chi^{2}$ only from below, and the upper bound an
experimental version would need is not available from any measurement. We
regard this negative result as being as important as the positive one, and state
it as a result rather than a caveat.

\emph{An exact case.} Section~\ref{sec:diffusion} carries the construction to diffusions and
solves the Ornstein--Uhlenbeck case exactly, which turns the numerical
observations of Sections~\ref{sec:design} and~\ref{sec:results} into formulas.

\subsection{What is not claimed}\label{sec:notclaimed}
The ingredients come from two literatures that do not usually meet, so we state
the boundary at the outset.

The inequality is not new. The multi-point form \eqref{eq:mp} is the Barankin
bound~\cite{Barankin1949}, with Hammersley--Chapman--Robbins as its $m=1$ case
and multiparameter Cram\'er--Rao as its derivative limit; the finite-$m$
Gram-matrix presentation is standard in signal
processing~\cite{McAulay1969,McAulay1971,Todros2010}, and it has recently been
brought into quantum metrology~\cite{Gessner2023}. The observation that
$\chi^{2}$ is the exact optimum of a finite-perturbation response ratio over all
observables is due to Falasco, Esposito and Delvenne~\cite{Falasco2022}.
Finite-perturbation fluctuation-response inequalities in stochastic
thermodynamics are due to Dechant and Sasa~\cite{Dechant2020}, with the
Kullback--Leibler divergence in place of $\chi^{2}$. The structure of the
closed form in Section~\ref{sec:gram} is the classical Hellinger transform of Jacod and
Shiryaev~\cite{Jacod2003,Kabanov1986}, and its point-process case is treated by
Leskel\"a~\cite{Leskela2024}.

What we claim is the combination and five specific things within it: the
evaluation of the Gram matrix for finite-state Markov jump processes
(Section~\ref{sec:gram}); the impossibility of a dissipation-only cost at finite
perturbation (Section~\ref{sec:dissipation}); the blind-direction phenomenon,
together with the observation that linear response is already exact for every
current, which locates the problem in nonlinear observables
(Section~\ref{sec:blind}); matching pursuit as a test-point rule
(Section~\ref{sec:design}); and the two limits of Sections~\ref{sec:limits}
and~\ref{sec:inference}, that the gain is a short-window effect and that coarse
data cannot certify it. Section~\ref{sec:diffusion} adds an exactly solvable
diffusion; its constants are classical once the problem is recognised as a
Gaussian location model, and we claim only their appearance in this setting.

\section{Setup and the multi-point bound}\label{sec:setup}
\subsection{Dynamics, observable, perturbation}\label{sec:dynamics}
Let $\Gamma=\{x_t\}_{t\in[0,\mathcal T]}$ be a trajectory of a time-homogeneous
Markov jump process on a finite state space $V$ with rates $w_{ij}$, in a
nonequilibrium steady state $p_0$. Write $n_e(\Gamma)$ for the number of
transitions along directed edge $e$, $\tau_i(\Gamma)$ for the residence time in
state $i$, and $r_i=\sum_{j\neq i}w_{ij}$ for the escape rate.

Perturb the rates by a vector $u$,
\begin{equation}
w_{ij}\;\longrightarrow\;w_{ij}\,e^{u_{ij}/2},
\label{eq:tilt}
\end{equation}
so that an antisymmetric $u$ changes edge affinities (an ``entropic'' tilt) and a
symmetric one changes traffic (a ``kinetic'' tilt). The latter acts on the
time-symmetric part of the path-space fluctuations~\cite{Maes2006}, the part that
also controls the kinetic uncertainty relation~\cite{DiTerlizzi2019}. By Girsanov the
Radon--Nikodym derivative of the perturbed path measure with respect to the
unperturbed one is
\begin{equation}
M_u(\Gamma)=\exp\!\left[\tfrac12\sum_e u_e n_e(\Gamma)-\sum_{i\in V}\tau_i(\Gamma)\,\Delta r_i(u)\right],
\qquad
\Delta r_i(u)=\sum_{j\neq i}w_{ij}\!\left(e^{u_{ij}/2}-1\right),
\label{eq:M}
\end{equation}
and $\Ex_0[M_u]=1$ identically.

The observable is $\eta$ of \eqref{eq:eta}. For the models of Section~\ref{sec:models} the
delivered work and the dissipation are linear in the transition counts, so
$\eta$ is a ratio of two currents.

\begin{remark}[$\eta$ must be a function on all of path space]
\label{rem:convention}
$\eta$ is undefined where the consumed exergy $W+T_0S$ vanishes. Throughout we
use the convention $\eta(\Gamma)=W/(W+T_0S)$ where $W+T_0S>0$ and
$\eta(\Gamma)=0$ elsewhere, so that $\eta$ is a genuine functional on path
space. This is not cosmetic: conditioning on $W+T_0S>0$ instead would make
$\langle\eta\rangle_u$ and $\langle\eta\rangle_0$ expectations of two
\emph{different} maps over two different sets, and \eqref{eq:hcr} would not
apply.

The convention is also not negligible, and we state its weight rather than leave
it to be discovered. Over a short window a small machine frequently fails to
consume any net exergy at all (it idles, or runs backwards), and the resulting
point mass at $\eta=0$ is large (Table~\ref{tab:pointmass} in
Appendix~\ref{sec:suppl}).

At the three-state model's working point, $41\%$ of windows carry no net exergy
consumption. Assigning them $\eta=0$ is a physical statement (a device that
consumed nothing converted nothing), not a technical patch, but it does mean
that $\Var(\eta)$ there is partly the variance of a Bernoulli indicator.
Three things keep the paper's conclusions clear of it. The bound and the
Cram\'er--Rao bound it is compared against are computed for the \emph{same}
functional, so every ratio reported below is unaffected by the convention. For
$\mathcal T\ge4$ the conditional and unconditional variances differ by less
than $10\%$; only at $\mathcal T=2$, where the point mass is largest ($74.7\%$),
do they differ substantially, by about $27\%$ ($3.73$ against
$4.73\times10^{-2}$), and there the first point applies. And the two-channel model at $\mathcal T\in\{1,2\}$ has a point mass of $2.7\%$
and $0.16\%$, negligible by any standard, and reproduces the same conclusions
(Section~\ref{sec:sweep}), so the results do not depend on a model in which the mass is
large.
\end{remark}

\subsection{The bound}\label{sec:bound}
\begin{proposition}[Multi-point projection bound]
\label{prop:mp}
Let $O\in L^{2}(\Omega,\Pr_0)$ and let $\{\Pr_{u_k}\}_{k=1}^{m}$ be path measures
absolutely continuous with respect to $\Pr_0$ with $M_{u_k}\in L^{2}$. Put
$b_k=\Cov_0(O,M_{u_k})$ and $G_{kl}=\Cov_0(M_{u_k},M_{u_l})$. Then
\begin{equation}
\Var_0(O)\;\ge\;b^{\top}G^{-1}b .
\label{eq:mp}
\end{equation}
\end{proposition}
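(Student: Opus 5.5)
The plan is to read \eqref{eq:mp} as a Bessel inequality in $L^{2}(\Omega,\Pr_0)$. We project the centred observable onto the span of the centred likelihood ratios. Write $\tilde O=O-\Ex_0[O]$ and $\tilde M_k=M_{u_k}-\Ex_0[M_{u_k}]=M_{u_k}-1$. Here $\Ex_0[M_{u_k}]=1$ because $\Pr_{u_k}\ll\Pr_0$ and $M_{u_k}$ is the Radon--Nikodym derivative, so the identity needs no appeal to the Girsanov form \eqref{eq:M}. All of these functions lie in $L^{2}$ by hypothesis, so $b_k=\langle\tilde O,\tilde M_k\rangle$ and $G_{kl}=\langle\tilde M_k,\tilde M_l\rangle$ with respect to the inner product $\langle f,g\rangle=\Ex_0[fg]$. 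The matrix $G$ is therefore a Gram matrix, symmetric and positive semidefinite. The statement writes $G^{-1}$, so we assume $G$ invertible, which is the same as the $\tilde M_k$ being linearly independent in $L^{2}$. At the end we indicate the modification with a pseudo-inverse.

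The key step is a one-line positivity computation. For any $a\in\mathbb R^{m}$, consider
\[
0\le \Ex_0\Bigl[\bigl(\tilde O-\textstyle\sum_k a_k\tilde M_k\bigr)^{2}\Bigr]=\Var_0(O)-2a^{\top}b+a^{\top}Ga .
\]
Choose $a=G^{-1}b$, which is the minimiser of this quadratic because $G\succ0$. This gives $\Var_0(O)-b^{\top}G^{-1}b\ge0$, which is \eqref{eq:mp}. Equivalently, $\sum_k a_k\tilde M_k$ is the orthogonal projection $P\tilde O$ of $\tilde O$ onto $\mathrm{span}\{\tilde M_k\}$, and $b^{\top}G^{-1}b=\|P\tilde O\|^{2}\le\|\tilde O\|^{2}$ by Pythagoras. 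The projection picture is worth stating explicitly, since it is the form the paper uses later. Equality holds iff $\tilde O$ lies in the span. Taking $m=1$ recovers \eqref{eq:hcr}.

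I expect no serious obstacle. The only points needing care are the integrability bookkeeping and the degenerate case. Every expectation above is finite by Cauchy--Schwarz, since $O,M_{u_k}\in L^{2}$. If $G$ is singular, we replace $G^{-1}$ by the Moore--Penrose inverse $G^{+}$ and note that $b\in\mathrm{range}(G)$. This holds because any $a\in\ker G$ gives $\sum a_k\tilde M_k=0$ in $L^{2}$, hence $a^{\top}b=0$, so $b\perp\ker G$. The same minimisation then goes through with $a=G^{+}b$, and it shows that adding redundant test points never changes the bound.
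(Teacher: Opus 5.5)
Your proposal is correct and follows the paper's argument: the paper also centres $O$ and the $M_{u_k}$ in $L^{2}(\Omega,\Pr_0)$ and projects $\Delta O$ onto $\mathrm{span}\{\Delta M_{u_k}\}$, obtaining $\|\Pi_{\mathcal S}\Delta O\|^{2}=b^{\top}G^{-1}b$ from the normal equations $Gc=b$, which is exactly what your positivity computation with $a=G^{-1}b$ encodes. Your pseudo-inverse treatment of singular $G$ (via $b\perp\ker G$) is a small addition the paper does not make, since it assumes $G$ non-singular.
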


\begin{proof}
Work in $\mathcal{H}=L^{2}(\Omega,\Pr_0)$ with $\langle X,Y\rangle=\Ex_0[XY]$ and
centre: $\Delta O=O-\Ex_0[O]$, $\Delta M_{u_k}=M_{u_k}-1$, using
$\Ex_0[M_{u_k}]=1$. Then $\langle\Delta O,\Delta M_{u_k}\rangle=b_k$ and
$\langle\Delta M_{u_k},\Delta M_{u_l}\rangle=G_{kl}$. Let
$\mathcal{S}=\mathrm{span}\{\Delta M_{u_1},\dots,\Delta M_{u_m}\}$. By the
projection theorem $\Delta O=\Pi_{\mathcal S}\Delta O+\Delta O^{\perp}$ with the
two terms orthogonal, so
$\Var_0(O)=\|\Delta O\|^{2}\ge\|\Pi_{\mathcal S}\Delta O\|^{2}$. Writing
$\Pi_{\mathcal S}\Delta O=\sum_k c_k\Delta M_{u_k}$, orthogonality of the
residual to each $\Delta M_{u_l}$ gives $Gc=b$, hence $c=G^{-1}b$ when $G$ is
non-singular, and $\|\Pi_{\mathcal S}\Delta O\|^{2}=c^{\top}Gc=b^{\top}G^{-1}b$.
\end{proof}

Three remarks fix the status of \eqref{eq:mp} and are used repeatedly below.

\begin{remark}[It is the Barankin bound]
\label{rem:barankin}
Proposition~\ref{prop:mp} is the finite-order Barankin
bound~\cite{Barankin1949,McAulay1969,Todros2010}. Setting $m=1$ recovers
\eqref{eq:hcr}; letting the $u_k$ shrink to zero along $d$ coordinate directions
recovers the multiparameter Cram\'er--Rao bound $v^{\top}I_F^{-1}v$ with
$v_e=\Cov_0(O,s_e)$ and $I_F$ the path-space Fisher information matrix. Adding
test points can only increase the bound, since the span grows. Barankin's
theorem states that the supremum over test points and over $m$ is the variance
of the locally best unbiased estimator; here that means $\Var_0(O)$ itself, so
the family is in principle exhaustive.
\end{remark}

\begin{remark}[The geometry, and what ``nonlinear'' means]
\label{rem:geometry}
By Cauchy--Schwarz,
\[
\sup_{g\in L^{2}}\ \frac{\Cov_0(O,g)^{2}}{\Var_0(g)}\;=\;\Var_0(O),
\]
attained at $g\propto O$. So
\eqref{eq:hcr} is the projection of $O$ onto the \emph{curve} $\{M_{\theta u}\}$,
Cram\'er--Rao is the projection onto that curve's tangent at the origin (that
is, onto the score alone), and \eqref{eq:mp} is the projection onto the span
of $m$ points on the manifold. What a finite perturbation adds is that $M_u$
carries every power of the score, not only the first. It also follows
that no member of this family can exceed $\Var_0(O)$, which is a useful check on
any numerical implementation.
\end{remark}

\begin{remark}[Test functions, not likelihood ratios]
\label{rem:testfunctions}
Nothing in the proof uses the fact that $M_{u_k}$ is a Radon--Nikodym
derivative. Cauchy--Schwarz gives
$\Cov_0(O,z)^{2}/\Var_0(z)\le\Var_0(O)$ for any $z\in L^{2}$, and the projection
argument goes through for any finite family of square-integrable test
\emph{functions}. This has two consequences used later: the plug-in estimator of
Section~\ref{sec:experiment}, for which $\Ex_0[\hat M_u]\ne1$ at finite sample size, still yields a
valid bound; and the enlarged family of Section~\ref{sec:sitepot}, whose members are not
likelihood ratios of any probability measure, is admissible.
\end{remark}

\section{The cost cannot be dissipation}\label{sec:dissipation}
Before building anything we rule out the construction one would most like to
have. A thermodynamic uncertainty relation bounds fluctuations by
$\langle\Sigma\rangle$; one might hope that the denominator of \eqref{eq:hcr} or
\eqref{eq:mp} could be replaced by dissipation in the same way. At infinitesimal
perturbation it can, and that is the ordinary relation. At finite perturbation
it cannot, and the obstruction is not technical.

\begin{remark}[No tilt-independent dissipation bound]
\label{rem:impossible}
There is no constant $C<\infty$, independent of $u$, with
$\chi^{2}(\Pr_u\Vert\Pr_0)\le C\,\langle\Sigma\rangle_0$. Fix an edge
$e=(i\to j)$ with $w_e>0$ and tilt only it, $u_e=\lambda>0$, so that its rate
becomes $w_ee^{\lambda/2}$ and the escape rate of $i$ becomes
$r_i+w_e(e^{\lambda/2}-1)$. Let $B$ be the event that $x_0=i$ and that the only
jump in $[0,\mathcal T]$ is a single jump $i\to j$ along $e$. On $B$, with jump
time $s$,
$M_u=e^{\lambda/2}\exp\!\big(-w_e(e^{\lambda/2}-1)s\big)$, and since
$\chi^{2}(\Pr_u\Vert\Pr_0)=\Ex_0[M_u^{2}]-1\ge\Ex_0[M_u^{2}\mathbf 1_B]-1$,
\begin{equation}
\chi^{2}(\Pr_u\Vert\Pr_0)+1\;\ge\;p_{0,i}\,w_e e^{\lambda}e^{-r_j\mathcal T}
\,\frac{1-e^{-\mathcal T(2w_ee^{\lambda/2}+r_i-2w_e)}}{2w_ee^{\lambda/2}+r_i-2w_e}
\;\sim\;\tfrac12\,p_{0,i}\,e^{-r_j\mathcal T}\,e^{\lambda/2}
\qquad(\lambda\to\infty).
\label{eq:impossible}
\end{equation}
The right-hand side diverges, while $\langle\Sigma\rangle_0$ is a property of
the unperturbed steady state and does not depend on $u$ at all. (A single test
observable does not suffice here: the response $\langle n_e\rangle_u-\langle
n_e\rangle_0$ of the edge count stays bounded, because a fast edge is quickly
starved by the rest of the network; what diverges is the rare-event mass that
$M_u^2$ puts on trajectories using $e$ once.)
\end{remark}

Stated this way the obstruction is close to definitional, which is why we
present it as a remark. Its value is that it closes a direction: entropy
production controls the $\theta\to0$ limit and nothing beyond it, so the
finite-perturbation regime lies outside the reach of any uncertainty relation
whose only currency is dissipation. What replaces dissipation as the cost is
$\chi^{2}$ itself; Section~\ref{sec:gram} shows that this is a computable object and
Section~\ref{sec:experiment} that it is a measurable one.

The divergence is fast. Along the optimal direction of the three-state model of
Section~\ref{sec:threestate}, at fixed $\langle\Sigma\rangle_0=2.38$:

\begin{table}[H]
\centering
\caption{$\chi^{2}$ at fixed dissipation, along the optimal perturbation
direction of the three-state model. $\langle\Sigma\rangle_0$ does not depend on
$|u|$; $\chi^{2}$ is unbounded in it}
\label{tab:impossible}
\small
\begin{tabular}{lccccccc}
\toprule
$|u|$ & 0.042 & 0.083 & 0.167 & 0.333 & 0.500 & 0.667 & 1.000\\
$10^{3}\,\chi^{2}/\langle\Sigma\rangle_0$ & 0.22 & 0.87 & 3.4 & 13.4 & 29.7 & 53.0 & 124\\
\bottomrule
\end{tabular}
\end{table}

\section{The Gram matrix in closed form}\label{sec:gram}
The bound \eqref{eq:mp} needs $G$, and for path measures $G$ is an expectation
over trajectories. It has a closed form.

\begin{proposition}[Gram matrix of tilted jump-process path measures]
\label{prop:fk}
For the tilt \eqref{eq:tilt} of a finite-state Markov jump process over
$[0,\mathcal T]$,
\begin{equation}
G_{kl}+1\;=\;\big\langle\mathbf{1}\big\vert\,
\exp\!\Big(\mathcal T\big[\mathcal L_{u_k+u_l}+\mathrm{diag}\,g_{kl}\big]\Big)
\big\vert p_0\big\rangle,
\qquad
g_{kl,i}=\sum_{j\neq i}w_{ij}\big(e^{u_{k,ij}/2}-1\big)\big(e^{u_{l,ij}/2}-1\big),
\label{eq:fk}
\end{equation}
where $\mathcal L_{u_k+u_l}$ is the ordinary generator of the process tilted at
$u_k+u_l$. On the diagonal, $g_{kk,i}=\sum_j w_{ij}(e^{u_{ij}/2}-1)^{2}\ge0$
and $G_{kk}=\chi^{2}(\Pr_{u_k}\Vert\Pr_0)$.
\end{proposition}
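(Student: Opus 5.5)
Since $\Ex_0[M_{u_k}]=1$, we have $G_{kl}+1=\Ex_0[M_{u_k}M_{u_l}]$, so the plan is to write the product of the two Girsanov densities \eqref{eq:M} as a single Radon--Nikodym derivative times an additive functional of the residence times, and then read off the Feynman--Kac formula. From \eqref{eq:M},
\[
M_{u_k}M_{u_l}=\exp\!\Big[\tfrac12\sum_e (u_{k,e}+u_{l,e})\,n_e-\sum_i\tau_i\big(\Delta r_i(u_k)+\Delta r_i(u_l)\big)\Big].
\]
Put $v=u_k+u_l$. Then $M_v=\exp[\tfrac12\sum_e v_e n_e-\sum_i\tau_i\Delta r_i(v)]$, so
\[
M_{u_k}M_{u_l}=M_v\,\exp\!\Big[\sum_i\tau_i\big(\Delta r_i(v)-\Delta r_i(u_k)-\Delta r_i(u_l)\big)\Big].
\]
The key algebraic step is the identity, for each edge, with $a=e^{u_{k,ij}/2}$ and $b=e^{u_{l,ij}/2}$:
\[
(ab-1)-(a-1)-(b-1)=(a-1)(b-1).
\]
Summing over $j\neq i$ with weights $w_{ij}$ gives $\Delta r_i(v)-\Delta r_i(u_k)-\Delta r_i(u_l)=g_{kl,i}$. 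Since $\sum_i\tau_i g_{kl,i}=\int_0^{\mathcal T}g_{kl}(x_t)\,dt$, this yields
\[
\Ex_0[M_{u_k}M_{u_l}]=\Ex_v\Big[\exp\!\Big(\int_0^{\mathcal T}g_{kl}(x_t)\,dt\Big)\Big].
\]
Here $\Ex_v$ is the expectation for the process with rates $w_{ij}e^{v_{ij}/2}$ and the same initial law $p_0$. The change of measure is legitimate because $M_v$ is the Radon--Nikodym derivative for exactly that process, with the initial law unchanged.

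Next we apply the Feynman--Kac formula for a finite-state jump process with generator $\mathcal L_v$ (in the column convention acting on probability vectors, as in $\vert p_0\rangle$). For a bounded potential $g$,
\[
\Ex_v\Big[\exp\!\int_0^{\mathcal T}g(x_t)\,dt\Big]=\big\langle\mathbf 1\big\vert e^{\mathcal T(\mathcal L_v+\mathrm{diag}\,g)}\big\vert p_0\big\rangle.
\]
To keep the argument self-contained, we would prove this by showing that $\phi_t(i)=\Ex_v[e^{\int_0^t g}\mathbf 1_{x_t=i}]$ satisfies the forward equation $\dot\phi=(\mathcal L_v+\mathrm{diag}\,g)\phi$. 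That follows by conditioning on $[t,t+dt]$, or equivalently from a Dyson expansion in the jumps, and uniqueness for this linear ODE finishes it. This is \eqref{eq:fk}.

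The diagonal statements then follow directly. With $k=l$ we get $g_{kk,i}=\sum_j w_{ij}(e^{u_{ij}/2}-1)^2\ge0$. We also get $G_{kk}=\Ex_0[M_{u_k}^2]-1=\chi^2(\Pr_{u_k}\Vert\Pr_0)$, by the definition of the $\chi^2$ divergence as $\Ex_0[(d\Pr_u/d\Pr_0)^2]-1$. Finiteness is automatic on a finite state space over a finite window, so $M_{u}\in L^2$ and Proposition~\ref{prop:mp} applies.

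The only genuine obstacle is bookkeeping the conventions: the sign and orientation of $\mathcal L$ (rows versus columns), and the requirement that the tilted generator $\mathcal L_v$ has diagonal $-\sum_j w_{ij}e^{v_{ij}/2}$. The whole construction hinges on the escape-rate correction $\Delta r_i(v)$ in $M_v$ matching that diagonal. I would check this once against the derivation of \eqref{eq:M} and against the sanity case $u_l=0$. In that case $g_{kl}=0$ and the formula gives $\langle\mathbf 1\vert e^{\mathcal T\mathcal L_{u_k}}\vert p_0\rangle=1$, consistent with $G_{kl}=0$.
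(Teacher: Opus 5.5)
Your proposal is correct and follows the paper's proof essentially step for step. You factor $M_{u_k}M_{u_l}=M_{u_k+u_l}\exp\bigl(\int_0^{\mathcal T}g_{kl}(x_t)\,dt\bigr)$ via the per-edge identity $(ab-1)-(a-1)-(b-1)=(a-1)(b-1)$ for the escape-rate mismatch, change measure to the process tilted at $u_k+u_l$, and apply Feynman--Kac; your forward-equation justification of that last step and the $u_l=0$ sanity check simply make explicit what the paper leaves implicit.
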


\begin{proof}
Multiplying two copies of \eqref{eq:M},
\[
M_{u_k}M_{u_l}=\exp\!\left[\sum_e n_e\frac{u_{k,e}+u_{l,e}}{2}
-\sum_i\tau_i\big(\Delta r_i(u_k)+\Delta r_i(u_l)\big)\right].
\]
The jump weight $e^{(u_k+u_l)_e/2}$ is the off-diagonal part of
$\mathcal L_{u_k+u_l}$, whose diagonal carries $-r_i-\Delta r_i(u_k+u_l)$ rather
than the required $-r_i-\Delta r_i(u_k)-\Delta r_i(u_l)$. The mismatch is
\begin{align*}
\Delta r_i(u_k{+}u_l)-\Delta r_i(u_k)-\Delta r_i(u_l)
&=\sum_{j}w_{ij}\Big[\big(e^{a+b}-1\big)-\big(e^{a}-1\big)-\big(e^{b}-1\big)\Big]\\
&=\sum_j w_{ij}\big(e^{a}-1\big)\big(e^{b}-1\big)\;=\;g_{kl,i},
\end{align*}
with $a=u_{k,ij}/2$, $b=u_{l,ij}/2$. Hence
$M_{u_k}M_{u_l}=M_{u_k+u_l}\exp\!\big(\int_0^{\mathcal T}g_{kl,x_t}dt\big)$ and
$\Ex_0[M_{u_k}M_{u_l}]=\Ex_{u_k+u_l}\big[e^{\int g_{kl}}\big]$, which is
\eqref{eq:fk} by Feynman--Kac.
\end{proof}

Two immediate consequences. Since $g_{kk}\ge0$, $\chi^{2}\ge0$ automatically, as
a divergence should be. And $\chi^{2}+1\le e^{\mathcal T\max_i g_i}$, an
envelope that is worthless in practice: at $\theta=1$ in the model of
Section~\ref{sec:threestate} it gives $2.7\times10^{5}$ against a true $\chi^{2}$ of $0.386$. We
record it only to note that the obvious state-independent bound leads nowhere.

\subsection{Every order at once, and what is classical}\label{sec:alpha}
The same computation gives all moments of $M_u$, which is a better-anchored
statement than the $\chi^{2}$ case alone:
\begin{equation}
\Ex_0\!\left[M_u^{\alpha}\right]
=\big\langle\mathbf{1}\big\vert e^{\,\mathcal T[\mathcal L_{\alpha u}+\mathrm{diag}\,c_\alpha(u)]}\big\vert p_0\big\rangle,
\qquad
c_{\alpha,i}(u)=\sum_j w_{ij}\Big[e^{\alpha u_{ij}/2}-\alpha e^{u_{ij}/2}+\alpha-1\Big],
\label{eq:alpha}
\end{equation}
with $c_2=g$ recovering Proposition~\ref{prop:fk}.

The potential in \eqref{eq:alpha} is a classical object. With reference
intensity $\lambda'=w$ and $\lambda=we^{u}$, the intensity of the doubled
tilt $2u$ (recall that a tilt $u$ multiplies rates by $e^{u/2}$), the
Tsallis/Hellinger integrand $\alpha\lambda+(1-\alpha)\lambda'
-\lambda^{\alpha}\lambda'^{1-\alpha}$ equals $\tfrac12 w(e^{u/2}-1)^{2}
=\tfrac12(\sqrt\lambda-\sqrt{\lambda'})^{2}$ at $\alpha=\tfrac12$. So the
diagonal potential $g$ is twice the density of the order-$\tfrac12$ Hellinger
process of Jacod and Shiryaev~\cite{Jacod2003,Kabanov1986}, equivalently the
R\'enyi-$\tfrac12$ divergence rate of the jump intensities; Leskel\"a gives the
point-process case explicitly~\cite{Leskela2024}. (The classical definition runs
over $0<\alpha<1$, where the integrand is non-negative by weighted
arithmetic--geometric mean; that is why $\alpha=2$ sits outside it and why $g$
enters \eqref{eq:fk} with the sign it does.)

What we have not found displayed is the evaluation for a finite-state
continuous-time chain. The discrete-time R\'enyi divergence rate between finite
Markov sources is a classical spectral-radius formula~\cite{Rached2001}; discrete
time has no waiting-time contribution, so the additive potential is exactly the
continuous-time feature that formula does not have. Falasco, Esposito and
Delvenne define $\chi^{2}$ on trajectory space and prove it is the exact optimum
of a finite-perturbation response ratio, but do not evaluate
it~\cite{Falasco2022}; \eqref{eq:fk} is the missing evaluation.

Equation \eqref{eq:fk} agrees with direct simulation of the three-state model to
relative error $\le10^{-4}$ on the diagonal and $7\times10^{-3}$ off it, and
with the two-channel model of Section~\ref{sec:twochannel}, where the state
space has one element and the matrix exponential collapses to a scalar, to
$2.3\%$ (Appendix~\ref{sec:checks}).

\section{Blind directions}\label{sec:blind}
\subsection{The hyperplane on which linear response is silent}\label{sec:hyperplane}
The first-order susceptibility of $O$ along a direction $u$ is
\begin{equation}
\chi_1(u)=\left.\frac{d\langle O\rangle_{\theta u}}{d\theta}\right|_{\theta=0}
=\sum_e u_e\,\Cov_0(O,s_e)=u^{\top}v,
\qquad v_e=\Cov_0(O,s_e),
\label{eq:chi1}
\end{equation}
which is \emph{linear} in $u$. The multiparameter Cram\'er--Rao bound in
direction $u$ is $\mathcal B_{\rm CR}(u)=(u^{\top}v)^{2}/(u^{\top}I_Fu)$. Hence

\begin{proposition}[Blind directions]
\label{prop:blind}
Let $O\in L^{2}(\Omega,\Pr_0)$ with $\Var_0(O)>0$ and $v\neq0$, and let
$\chi_2(u)=\partial^{2}_{\theta}\langle O\rangle_{\theta u}|_{\theta=0}$, a
quadratic form in $u$.
\begin{enumerate}
\item $\mathcal B_{\rm CR}$ vanishes identically on the codimension-one
hyperplane $\Hb=\{u:u^{\top}v=0\}$.
\item For any $u\in\Hb$ \emph{with $\chi_2(u)\neq0$},
\begin{equation}
\lim_{\theta\to0}\frac{1}{\theta^{2}}\,
\frac{\big[\langle O\rangle_{\theta u}-\langle O\rangle_0\big]^{2}}
{\chi^{2}(\Pr_{\theta u}\Vert\Pr_0)}
=\frac{\chi_2(u)^{2}}{4\,\mathcal I_F(u)}>0,
\label{eq:blindlimit}
\end{equation}
so the single-point bound \eqref{eq:hcr} is strictly positive at every
sufficiently small $\theta\neq0$.
\item Such $u$ exist whenever $\chi_2$ does not vanish identically on $\Hb$,
which is generic: the zero set of a non-zero quadratic form is a measure-zero
subvariety of the hyperplane.
\end{enumerate}
\end{proposition}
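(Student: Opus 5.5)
Part 1 is immediate from the formula $\mathcal B_{\rm CR}(u)=(u^{\top}v)^{2}/(u^{\top}I_Fu)$: on $\Hb$ the numerator vanishes. We also record that $\Hb$ is a genuine codimension-one subspace because $v\neq0$. For $u$ with $u^{\top}I_Fu=0$ the ratio is read as zero; equivalently one restricts attention to directions with $\mathcal I_F(u)=u^{\top}I_Fu>0$. These are the only directions for which part 2 makes sense, and we assume them there.

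For part 2 the plan is a second-order Taylor expansion in $\theta$ of both numerator and denominator of the single-point ratio, and the limit then follows.

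\emph{Numerator.} Write $f(\theta)=\langle O\rangle_{\theta u}=\Ex_0[O\,M_{\theta u}]$. By \eqref{eq:M}, on a finite state space over a finite window $M_{\theta u}$ is smooth in $\theta$, with derivatives dominated by $e^{C|\theta|\,N}$ times polynomials in $N$, where $N$ is the total jump count. Since $N$ has exponential moments of all orders and $O\in L^{2}$, differentiation under $\Ex_0$ is justified, and $f\in C^{3}$ near $0$. Then $f(\theta)-f(0)=\theta\chi_1(u)+\tfrac12\theta^{2}\chi_2(u)+O(\theta^{3})$, and $\chi_1(u)=u^{\top}v=0$ on $\Hb$. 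Hence $[f(\theta)-f(0)]^{2}=\tfrac14\theta^{4}\chi_2(u)^{2}+O(\theta^{5})$.

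\emph{Denominator.} Use Proposition~\ref{prop:fk}: $h(\theta)=\chi^{2}(\Pr_{\theta u}\Vert\Pr_0)=\Ex_0[M_{\theta u}^{2}]-1$. By the same domination argument, or directly from the matrix-exponential form \eqref{eq:fk}, which is analytic in $\theta$, $h$ is smooth. It satisfies $h(0)=0$, and $h'(0)=2\Ex_0[\partial_\theta M]=0$ because $\Ex_0[M_{\theta u}]\equiv1$. Its second derivative is $h''(0)=2\Ex_0[(\partial_\theta M|_0)^{2}]=2\,u^{\top}I_Fu$, since $\partial_\theta M|_0=u^{\top}s$ is the score and $\Ex_0[\partial^2_\theta M]=0$. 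So $h(\theta)=\theta^{2}\mathcal I_F(u)+O(\theta^{3})$.

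Dividing, the ratio equals $\theta^{2}\chi_2(u)^{2}/(4\mathcal I_F(u))+O(\theta^{3})$. This gives \eqref{eq:blindlimit}, and the limit is strictly positive when $\chi_2(u)\neq0$. Since the limit of the ratio divided by $\theta^{2}$ is positive, the ratio itself is positive for all small $\theta\neq0$. By \eqref{eq:hcr}, whose numerator is $\Cov_0(O,M_{\theta u})^{2}=[f(\theta)-f(0)]^{2}$, the single-point bound is strictly positive there.

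Part 3 follows from $\chi_2$ being a quadratic form. If its restriction to the linear space $\Hb$ is not identically zero, it is a nonzero polynomial on $\Hb\cong\mathbb R^{d-1}$. Its zero set is therefore a proper algebraic subvariety, of Lebesgue measure zero, so its complement in $\Hb$ is nonempty, open and dense.

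The main obstacle is the analytic justification in part 2. We need $\Ex_0[O\,\partial^k_\theta M]$ to exist for $k\le3$ uniformly near $0$ using only $O\in L^{2}$. I would bound $\sup_{|\theta|\le\theta_0}|\partial^k_\theta M_{\theta u}|\le P_k(N,\mathcal T)\,e^{\theta_0|u|_\infty N/2+c\mathcal T}$ and use Cauchy--Schwarz. The required finiteness of $\Ex_0[e^{aN}P_k(N)^{2}]$ comes from Proposition~\ref{prop:fk}'s Feynman--Kac representation, or from Poisson domination of $N$ by rate $\max_i r_i$. A secondary point to state explicitly is the nondegeneracy $\mathcal I_F(u)>0$, which the statement implicitly requires.
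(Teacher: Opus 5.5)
Your proposal is correct and follows the paper's proof. Both expand the numerator to $\tfrac14\theta^{4}\chi_2(u)^{2}+\mathcal O(\theta^{5})$ using $\chi_1(u)=0$, expand the denominator $\chi^{2}=\Ex_0[(M_{\theta u}-1)^{2}]$ to $\theta^{2}\mathcal I_F(u)+\mathcal O(\theta^{3})$, and settle part 3 by the measure-zero argument for a nonzero polynomial; your domination argument for differentiating under $\Ex_0$ and your explicit caveat that $\mathcal I_F(u)>0$ are justifications the paper leaves implicit (it simply asserts $\mathcal I_F(u)>0$), not a different route.
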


\begin{proof}
(1) is immediate from \eqref{eq:chi1}. For (2), expand numerator and
denominator. Since $\chi_1(u)=0$,
$\langle O\rangle_{\theta u}-\langle O\rangle_0=\tfrac{\theta^{2}}{2}\chi_2(u)
+\mathcal O(\theta^{3})$, whose square is
$\tfrac{\theta^{4}}{4}\chi_2(u)^{2}+\mathcal O(\theta^{5})$. Meanwhile
$M_{\theta u}=1+\theta s_1(u)+\mathcal O(\theta^{2})$ gives
$\chi^{2}(\Pr_{\theta u}\Vert\Pr_0)=\Ex_0[(M_{\theta u}-1)^{2}]
=\theta^{2}\mathcal I_F(u)+\mathcal O(\theta^{3})$ with
$\mathcal I_F(u)=u^{\top}I_Fu>0$. The ratio is
$\theta^{2}\chi_2(u)^{2}/(4\mathcal I_F(u))+\mathcal O(\theta^{3})$. (3) is the
statement that a non-zero polynomial does not vanish on a set of full measure.
\end{proof}

The condition in (2) belongs on the direction and not on the observable:
$\chi_2$ is a quadratic form, and it may vanish for particular $u\in\Hb$ while
being non-zero for others. Part (3) is what prevents the proposition being
vacuous.

\emph{Why blindness is generic rather than exceptional.} The mechanism is not
delicate. $\chi_1$ is a \emph{linear} functional of the perturbation, so its
kernel is a hyperplane whatever the model: blindness is a codimension-one
condition, present in every system with more than one independently perturbable
edge, and it is the \emph{absence} of blind directions that would need
explaining. What is model-dependent is only which directions they are. In the
three-state model, $\eta$ responds to all four edges with the same sign
($v=(0.021,\,0.128,\,0.065,\,0.018)$), so a blind combination is one in which
the response to tilting the fuel edge is cancelled by an opposite tilt of the
working stroke.

We therefore claim nothing exotic for the existence of $\Hb$. That a first
variation vanishes on the kernel of a linear functional, and that the leading
surviving term is then of second order, is elementary; the content of
Proposition~\ref{prop:blind} is the identification of the limiting ratio
\eqref{eq:blindlimit}. The geometry leaves one quantitative question open:
\emph{how much} of the variance the finite-perturbation bound recovers on $\Hb$.
The second-order term could be arbitrarily small compared with $\Var(\eta)$, and
the construction would then be correct but useless. That it reaches $75\%$ on
the three-state model and $78\%$ on the two-channel model is a measured fact
about those models with no a priori guarantee behind it, and it is the only
reason the hyperplane is worth computing on at all.

The cancellation cannot be arranged on a single edge, and the reason is a
theorem rather than an accident.

\begin{remark}[A blind direction cannot be a single edge]
\label{rem:multiedge}
Bao and Liang prove an identity relating nonlinear to linear response for
single-edge and single-vertex perturbations~\cite{Bao2024}, under which a
vanishing $\chi_1$ forces a vanishing $\chi_2$. On such perturbations
Proposition~\ref{prop:blind} would be empty: blindness and uselessness would
coincide, and there would be nothing to report. It is not empty, because $\Hb$
is a hyperplane through the origin and its non-trivial elements are necessarily
\emph{combinations} of edges. On a combination the per-edge first-order responses
cancel against one another by construction, while the second-order terms (a
quadratic form, not a linear one) do not, and no identity forces them to. In the
three-state model the blind direction inside the first two edges is
$u\propto(1,-0.16175,0,0)$, and the measured susceptibility there is
$\chi_1=2\times10^{-13}$ against a Cram\'er--Rao bound of $8\times10^{-26}$.

So the phenomenon requires a network. It is invisible in any model perturbed one
edge at a time, which may be why it has not been reported.
\end{remark}

\subsection{The scale of the effect, and why one point is not enough}\label{sec:scale}
Equation \eqref{eq:blindlimit} says the bound vanishes as $\theta^{2}$ and is
positive thereafter; it must therefore rise, and then fall once $\chi^{2}$ grows
faster than the squared response, so an interior optimum is guaranteed. Fig.~\ref{fig:blind}(a)
shows both behaviours. Along a responsive direction the bound is flat at its
Cram\'er--Rao value for small $\theta$ and falls away; along the blind direction
it rises as $\theta^{2}$ from zero, peaks near $\theta=2.5$, and falls.

At the peak the single-point blind bound is $0.297\%$ of $\Var(\eta)$. That is
strictly positive where Cram\'er--Rao is $8\times10^{-26}$, i.e. zero to machine
precision, but small. Section~\ref{sec:design} is what turns it into a useful number.

\subsection{Why the observable must be nonlinear}\label{sec:currents}

Blind directions exist for any observable with $v\neq0$, currents included. What
distinguishes $\eta$ is what happens \emph{off} the hyperplane, and the answer
is sharp once the family of perturbations is enlarged by site potentials, which
weight the residence time in state $i$ by $e^{-V_i\tau_i}$ (they are used as test
functions in Section~\ref{sec:sitepot}; here only their infinitesimal scores
$s_{V,i}=-(\tau_i-\langle\tau_i\rangle_0)$ are needed).

\begin{proposition}[Linear response is exact for currents]
\label{prop:currents}
Let $\mathcal S$ be the span in $L^{2}(\Pr_0)$ of the centred scores of all edge
tilts \eqref{eq:tilt} and all site potentials. Then
$\mathcal S=\mathrm{span}\{n_e-\langle n_e\rangle_0,\ \tau_i-\langle\tau_i\rangle_0\}$.
Consequently, for every observable of the form
$O=\sum_e c_en_e+\sum_i d_i\tau_i$, and in particular for every time-integrated
current, the multiparameter Cram\'er--Rao bound over this family equals
$\Var_0(O)$. If the tilts are restricted to antisymmetric ones, one parameter
per bond as in the models below, the same holds with $n_e$ the net count on each
bond, which still covers every current.
\end{proposition}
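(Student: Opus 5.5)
The plan is to compute the scores explicitly from \eqref{eq:M}, identify their span as a linear image of the counts and residence times, and then invoke the projection geometry of Proposition~\ref{prop:mp} and Remark~\ref{rem:geometry}.

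\emph{Step 1: the scores.} Differentiate $\log M_{\theta u}$ at $\theta=0$ in the edge coordinate $u_e$, $e=(i\to j)$. This gives
\[
s_e=\tfrac12\bigl(n_e-w_e\tau_i\bigr),
\]
which is centred, since $\langle n_e\rangle_0=w_e\langle\tau_i\rangle_0$ by the compensator property of the counting process. The site potentials give $s_{V,i}=-(\tau_i-\langle\tau_i\rangle_0)$.

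\emph{Step 2: the span.} The span $\mathcal S$ contains every $\tau_i-\langle\tau_i\rangle_0$, from the site scores. Adding $\tfrac12 w_e(\tau_i-\langle\tau_i\rangle_0)$ to $s_e$ then recovers $\tfrac12(n_e-\langle n_e\rangle_0)$, so $\mathcal S$ contains each centred $n_e$. Conversely, every score is a linear combination of these centred variables. Hence
\[
\mathcal S=\mathrm{span}\{n_e-\langle n_e\rangle_0,\ \tau_i-\langle\tau_i\rangle_0\}.
\]
This is a finite-dimensional, hence closed, subspace, so no closure issue arises.

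\emph{Step 3: the bound equals the variance.} For $O=\sum_e c_en_e+\sum_i d_i\tau_i$, the centred observable $\Delta O$ lies in $\mathcal S$. The multiparameter Cram\'er--Rao bound $v^\top I_F^{-1}v$ is exactly $\|\Pi_{\mathcal S}\Delta O\|^2$: take the projection argument of Proposition~\ref{prop:mp} with the scores in place of $\Delta M_{u_k}$ and $I_F$ as their Gram matrix, as in Remark~\ref{rem:barankin}. Since $\Delta O\in\mathcal S$, we have $\Pi_{\mathcal S}\Delta O=\Delta O$, and the bound equals $\Var_0(O)$.

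The only technical point is that the score Gram matrix may be singular. The scores are linearly dependent, for instance because $\sum_i\tau_i=\mathcal T$ makes $\sum_i s_{V,i}=0$. I would handle this by replacing $I_F^{-1}$ with the Moore--Penrose pseudoinverse, or by passing to a basis of $\mathcal S$. The projection norm is basis-independent, so the conclusion is unaffected.

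\emph{Step 4: antisymmetric tilts.} With $u_{ij}=-u_{ji}=\phi_b$ on bond $b=\{i,j\}$, the score is $s_{ij}-s_{ji}$. Up to a factor this is
\[
(n_{ij}-n_{ji})-\bigl(w_{ij}\tau_i-w_{ji}\tau_j\bigr).
\]
Together with the site scores, this family spans the net counts $J_b=n_{ij}-n_{ji}$ and the residence times, by the same elimination as in Step 2. Every current is a linear combination of the $J_b$ and is therefore captured.

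I expect the main obstacle to be Step 1's centring and the treatment of the possibly singular Fisher matrix in Step 3, rather than the span argument itself. The remaining steps are linear-algebra bookkeeping.
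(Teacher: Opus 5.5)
Your proposal is correct and follows the paper's own proof. Both compute the edge score from \eqref{eq:M}, use the site scores to strip off its residence-time part so that the span is exactly that of the centred counts and residence times, and conclude from the projection form $\|\Pi_{\mathcal S}\Delta O\|^{2}$ of the Cram\'er--Rao bound. Your pseudoinverse treatment of the singular Fisher matrix (from $\sum_i\tau_i=\mathcal T$) and your explicit antisymmetric-bond computation fill in details the paper leaves implicit without changing the route.
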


\begin{proof}
Differentiating \eqref{eq:M} at $u=0$ gives the edge score
$s_e=\tfrac12(n_e-\langle n_e\rangle_0)-\sum_i\partial_{u_e}\Delta r_i(0)\,
(\tau_i-\langle\tau_i\rangle_0)$. Adding the site scores with the coefficients
$\partial_{u_e}\Delta r_i(0)$ isolates $\tfrac12(n_e-\langle n_e\rangle_0)$, so
every centred count and every centred residence time lies in $\mathcal S$, and
conversely every score is a combination of them. The multiparameter
Cram\'er--Rao bound is $\|\Pi_{\mathcal S}\Delta O\|^{2}$ (Remark~\ref{rem:geometry}),
which equals $\|\Delta O\|^{2}=\Var_0(O)$ when $\Delta O\in\mathcal S$.
\end{proof}

So for a current there is nothing for a finite perturbation to add: linear
response over the enlarged family already saturates the variance. On the
three-state model of Section~\ref{sec:threestate} ($2\times10^{5}$ trajectories)
the net count of the work stroke has Cram\'er--Rao $88.2\%$ of its variance over
edge tilts alone and $100.0000\%$ over the enlarged family; the fuel count has
$90.7\%$ and $100.0000\%$, and the residence time in state $0$ has $98.4\%$ and
$100.0000\%$. For $\eta$ the same two numbers are $62.7\%$ and $69.0\%$, and
for the square of the work-stroke count, a nonlinear control, $51.1\%$ and
$53.4\%$. The gap left by $\eta$ is exactly its
nonlinearity in the record $(n,\tau)$, and it is this gap that the rest of the
paper measures. Efficiency is therefore not merely an example: a ratio of
currents is among the simplest observables of thermodynamic interest for which
the question of this paper is not empty.

\section{Choosing the test points}\label{sec:design}
\subsection{Why the choice is the whole problem}\label{sec:whyselect}
The supremum in Barankin's theorem is over test points as well as coefficients,
and neither the estimation literature nor the physics literature offers a
placement theory. Practice in signal processing is domain heuristics plus
empirical saturation: McAulay and Hofstetter place points at matched-filter
sidelobe peaks and report no improvement past four~\cite{McAulay1971}. Marzetta
recasts the supremum over coefficients as an unconstrained quadratic
problem~\cite{Marzetta1997}. We found no convergence rate in $m$ and no greedy
or submodularity result.

Two observations make the problem tractable. First, adding a point raises
\eqref{eq:mp} by exactly the squared correlation of the \emph{residual} with the
new direction after orthogonalisation against the existing span. Second, $M$
depends on an amplitude and a direction only through their product, so
``optimise amplitude and direction'' is a single unconstrained optimisation over
$u\in\mathbb R^{d}$.

\subsection{Three rules}\label{sec:rules}
\emph{Matching pursuit.} Maintain an orthonormal basis $Q$ of the current centred
span, form the residual $r=\Delta O-QQ^{\top}\Delta O$, and take
\begin{equation}
u_{m+1}=\arg\max_{u}\ \frac{\langle r,x_\perp\rangle^{2}}{\|x_\perp\|^{2}},
\qquad x_\perp=x-QQ^{\top}x,\quad x=M_u-\Ex_0[M_u],
\label{eq:pursuit}
\end{equation}
a continuous optimisation at each step rather than a search over a list.

\emph{Joint optimisation.} Move all $m$ points together by quasi-Newton descent.
With $\alpha=(G+\rho I)^{-1}b$,
\begin{equation}
\frac{\partial}{\partial u_k}\big(b^{\top}(G+\rho I)^{-1}b\big)
=2\alpha_k\left[\frac{\partial b_k}{\partial u_k}
-\sum_l \alpha_l\frac{\partial G_{kl}}{\partial u_k}\right],
\label{eq:grad}
\end{equation}
because $b_k$ depends only on $u_k$ while $G_{lm}$ depends on $u_l$ and $u_m$,
and the two contributions coincide by symmetry of $G$. Every ingredient is an
empirical covariance: with $S_k$ the per-trajectory score of point $k$,
$\partial_{u_k}b_k=\Cov_0(O,M_kS_k)$ and
$\partial_{u_k}G_{kl}=\Cov_0(M_kS_k,M_l)$. The ridge $\rho$ must be an
\emph{absolute} constant fixed once for the optimisation; a ridge proportional to
$\mathrm{tr}\,G$ depends on the parameters and \eqref{eq:grad} then omits
$-(d\rho)\|\alpha\|^{2}$, an error worth $22\%$ of the gradient norm and $12$
degrees of direction in our tests. With a fixed ridge, \eqref{eq:grad} agrees
with central differences to $6\times10^{-9}$ at zero angle.

\emph{Dictionary.} Skip selection: take a large fixed dictionary of candidates
and solve one regularised problem over all of them. Marzetta showed that the
Barankin supremum can be written as the unconstrained maximisation of a
\emph{concave quadratic} functional rather than of a ratio of
quadratics~\cite{Marzetta1997}. In his formulation the test points range over a
continuum $\beta$ with an unknown weight \emph{function} $f$, the kernel is the
uncentred $k_A(B,C)=\Ex_A[M_BM_C]$, and the bound is the maximum over $f$ of
\begin{equation}
\int_{\beta}\!\big[(B-A)f^{\top}(B)+f(B)(B-A)^{\top}\big]dB
\;-\;\int_{\beta}\!\int_{\beta} k_A(B,C)\,f(B)f^{\top}(C)\,dB\,dC ,
\label{eq:marzetta}
\end{equation}
closed by an integral equation for the maximiser. Restricted to a finite
dictionary and to a scalar observable this is
$\max_c\,[2c^{\top}b-c^{\top}Gc]=b^{\top}G^{-1}b$ at $c=G^{-1}b$, which is what
we solve, with the ridge chosen on a held-out split.

Marzetta's motivation was simplicity and separability rather than the
ill-conditioning that makes the concave form indispensable here, and he noted
that the continuum problem had not yet been attacked with modern numerical
tools~\cite{Marzetta1997}. Our dictionary rule is a finite-dimensional
discretisation of that programme; the integral-equation version remains untried.

\subsection{None of them dominates}\label{sec:nodominate}
Table~\ref{tab:select} gives the three rules on the three-state model at $m=12$.

\begin{table}[H]
\centering
\caption{Test-point selection on the three-state model, extended family,
$m=12$, as a percentage of $\Var(\eta)$. Joint optimisation from a fresh start
wins in the full space; greedy pursuit wins on the blind hyperplane. Refining a
greedy design jointly does not work: the training objective moves from $79.7\%$
to $79.8\%$ in 300 iterations, i.e. greedy sits in a local optimum of the
span-selection problem that the optimiser cannot leave}
\label{tab:select}
\begin{tabular}{lcc}
\toprule
 & full space & blind hyperplane\\
\midrule
greedy matching pursuit & 80.2 & \textbf{72.3}\\
greedy design, then joint & 78.8 & 61.9\\
Hessian-aligned start, then joint & \textbf{86.4} & 64.3\\
random start, then joint & 83.9 & 65.1\\
\bottomrule
\end{tabular}
\end{table}

Since each design yields a valid bound whatever its provenance, the natural
procedure is to run all three and report the largest; across the sweeps of
Section~\ref{sec:results} each of the three wins at least once. Initialisation matters for the
joint rule, which may seem surprising: the projection onto a \emph{fixed} span is
convex and initialisation-independent, but choosing the span is the outer
problem, and it is not convex.

\subsection{A numerical trap}\label{sec:trap}
$G$ is severely ill-conditioned, so the coefficients must come from a truncated
eigendecomposition. The columns must be standardised first. Infinitesimal basis
tilts have $|u|\sim10^{-5}$ and hence centred columns of norm $\sim10^{-5}$, so a
\emph{relative} eigenvalue cutoff deletes them outright: in our first
implementation this dropped the effective rank from 6 to 2 and made the reported
bound \emph{fall} from $63\%$ to $30\%$ as points were added. A bound that
decreases when test points are added is always an estimator failure and never
the theorem, since the span can only grow.

\subsection{Scaling with the size of the network}\label{sec:scaling}
Both models here are small, and the obvious question is what happens when
$|V|$ grows: what the Gram matrix costs to build, and whether its conditioning
makes the solve impossible without heavy regularisation. The second question is
the sharper one, and it can be answered exactly, with no Monte Carlo, because
$G$ depends only on the generator and the test points and not on the observable.
That is what the closed form buys.

We take a driven ring of $N$ states: bond $i$ carries $i\to i+1$ at rate
$k\,e^{+\mathcal A/2N}$ and $i+1\to i$ at $k\,e^{-\mathcal A/2N}$, so the cycle
affinity is $\mathcal A$ and the tilt space has dimension $N$. Throughout,
$\mathcal A=4$, $k=1$, $\mathcal T=4$, and the test points are random directions
of fixed norm $0.35$.

\emph{Cost.} Building $G$ requires $m(m+1)/2$ matrix exponentials of size
$N\times N$, hence $\mathcal O(m^{2}N^{3})$ in principle. In practice, for
$m=8$, the whole matrix takes $2$~ms at $N=3$ and $11$~ms at $N=48$: a
sixteenfold increase in $N$ for a fivefold increase in time, because dense
$\exp$ on matrices this small is dominated by overhead rather than by the cubic
term. Cost is not the obstacle at any size we can simulate.

\emph{Conditioning.} Table~\ref{tab:scaling} reports $\kappa(G)$ against the
number of test points at fixed $N$, and against $N$ at fixed $m$.

\begin{table}[H]
\centering
\caption{Conditioning of the Gram matrix on a driven $N$-ring. Left: against
$m$ at $N=12$. Right: against $N$ at $m=12$. Ranks are at relative tolerance
$10^{-10}$ and are full in every case}
\label{tab:scaling}
\begin{tabular}{rr@{\qquad\qquad}rr}
\toprule
$m$ & $\kappa(G)$ & $N$ & $\kappa(G)$ \\
\midrule
 2 & $1.2\times10^{0}$  &  3 & $2.8\times10^{5}$ \\
 4 & $2.9\times10^{0}$  &  6 & $9.1\times10^{2}$ \\
 8 & $2.1\times10^{1}$  & 12 & $1.1\times10^{2}$ \\
12 & $4.9\times10^{2}$  & 24 & $1.2\times10^{1}$ \\
16 & $6.7\times10^{2}$  & 48 & $5.9\times10^{0}$ \\
24 & $2.8\times10^{3}$  &    &                   \\
\bottomrule
\end{tabular}
\end{table}

The right-hand column runs the opposite way to the worry. Conditioning
\emph{improves} as the network grows, by nearly five orders of magnitude between
$N=3$ and $N=48$. What degrades $G$ is not the size of the state space but the
ratio of the number of test points to the dimension of the tilt space. Twelve
random directions in a three-dimensional tilt space are severely redundant: they
remain linearly independent as path functionals, since $u\mapsto M_u$ is
nonlinear, but only weakly, and $\kappa$ measures exactly that. Twelve random
directions in forty-eight dimensions are nearly orthogonal. The ill-conditioning
documented in Section~\ref{sec:trap} is therefore a small-network artefact, and the
construction is better behaved on the large networks where it has not yet been
tried than on the ones where it has.

The same improvement appears on random 3-regular graphs and on disordered
$L\times L$ grids with no symmetry (Appendix~\ref{sec:topo}): at equal
dimension $d$ of the tilt space the three topologies agree to within a factor
of about two and a half, so the improvement is set by $d$ and not by the
symmetry of the ring.

Two caveats. Column standardisation, which Section~\ref{sec:trap} shows is not optional,
changes $\kappa$ by under one per cent here, because these test points already
share a common norm; it matters when the design mixes finite tilts with
infinitesimal basis directions, which is the case that produced the failure
recorded in Section~\ref{sec:trap}. And this settles conditioning and cost
only; whether the \emph{gain} survives is a separate question, answered next.

\subsection{Does the gain survive on larger networks?}\label{sec:largeN}

To answer it the network needs an energetics, and the cleanest one is the
three-state motor of Section~\ref{sec:threestate} itself, stretched. We keep its
fuel edge $0\to1$ (force $\Delta\mu$), its work stroke $1\to2$ (force $-w$) and
its futile edge, and replace the single return bond $2\to0$ by a chain of $N-2$
neutral bonds $2\to3\to\dots\to N-1\to0$, all with a common rate $k_n$. The
return is then diffusive and slow, so $k_n$ is calibrated to hold the mean
productive cycle current at its three-state value; $N=3$, $k_n=1$ is the model
of Section~\ref{sec:threestate} exactly. The energetics, the observable $\eta$
and the window $\mathcal T=8$ are unchanged, while the tilt space grows to $N+1$
edge parameters and $2N+1$ parameters with site potentials. For each $N$ we
simulate $2\times10^{5}$ trajectories and score, by the split-sample protocol of
Appendix~\ref{sec:protocol}, a design made of the $2N+1$ infinitesimal basis
tilts plus six finite points chosen by matching pursuit, and separately a
dictionary of $200$ random finite tilts; the table reports the better of the
two. The joint rule, which needs the Hessian of the response and is the most
expensive, is not run, so the numbers are conservative: at $N=3$ the best of all
three rules gives $86.0\%$ (Table~\ref{tab:chain}) against $82.7\%$ here.

\begin{table}[H]
\centering
\caption{The three-state motor stretched to an $N$-state ring, $\mathcal T=8$,
$2\times10^{5}$ trajectories per $N$, split-sample protocol. $d$ is the number
of parameters with site potentials. CR: multiparameter Cram\'er--Rao over edge
tilts (the best linear-response bound) and over edge tilts plus site
potentials. ``best'' is the largest bound over all designs found; ``gain'' is
its ratio to CR over edge tilts; ``blind'' is the bound on the blind hyperplane,
where both Cram\'er--Rao bounds vanish}
\label{tab:largeN}
\begin{tabular}{rrrrrrr}
\toprule
$N$ & $d$ & CR, edges & CR, edges + sites & best & gain & blind\\
\midrule
 3 &  7 & 62.6 & 68.9 & 82.7 & $1.32\times$ & 71.1\\
 6 & 13 & 62.1 & 70.1 & 82.1 & $1.32\times$ & 75.6\\
10 & 21 & 60.8 & 70.7 & 77.5 & $1.27\times$ & 77.5\\
16 & 33 & 59.4 & 71.0 & 77.0 & $1.30\times$ & 77.0\\
\bottomrule
\end{tabular}
\end{table}

The gain survives. Over a nearly fivefold increase in the dimension of the
perturbation space it stays between $1.27$ and $1.32$, and on the blind
hyperplane, where linear response gives exactly zero at every $N$ (the largest
residual $|v\cdot p|$ over the designs is $4\times10^{-16}$), the bound stays
between $71\%$ and $78\%$ of $\Var(\eta)$. What does change is the search. At
$N\le6$ the best design is found by pursuit in the full space; at $N\ge10$
pursuit in the full space falls behind (it scores $75.9\%$ at $N=10$ and
$72.3\%$ at $N=16$, the latter from the dictionary rule), and the best designs
come from pursuit on the blind hyperplane, which is a valid full-space design
too. Pursuit uses a diagonal preconditioning that measures each parameter in
units of the standard deviation of its score, since the fast neutral bonds
otherwise dominate $\chi^{2}$; a first run without it, and with a
derivative-free optimiser, gave only $28\%$ on the blind hyperplane at $N=16$. What limits the
bound at larger $N$ is therefore, as far as we can tell, the optimiser rather
than the phenomenon, which is the conclusion of Section~\ref{sec:nodominate} at
larger scale. The script and
its output are in the replication package~\cite{FarihCode2026}; $N=16$ takes
about $45$ CPU-minutes.

\subsection{How much the design matters}\label{sec:placementsum}

Since no rule dominates, one may ask how much the reported numbers depend on the
design at all. Appendix~\ref{sec:placement} measures it on the three-state
model over ten independent random pools of test points, with the trajectory
ensemble held fixed. The design matters at the scale of a few percentage points
and not tens (standard deviation about $1.2$ points across pools at every $m$),
and the edge-tilt family saturates: a fit to the climb extrapolates to about
$74\%$ of $\Var(\eta)$, which enlarging the family with site potentials
(Section~\ref{sec:sitepot}) raises to $84.0\%$.

\section{Models}\label{sec:models}
\subsection{A three-state motor with a futile cycle}\label{sec:threestate}
Three states, four edges, two cycles sharing the state space: a productive cycle
$0\to1\to2\to0$ consuming one fuel unit and delivering work $w$, and a futile
cycle through a parallel $0\leftrightarrow1$ edge consuming fuel and delivering
none. Rates $w_{ij}=k_e e^{\pm F_e/2}$ with forces $F=(\Delta\mu,-w,0,0)$ and
$k=(1,1,1,0.05)$; baseline $\Delta\mu=4$, $w=2$, $\mathcal T=8$, giving
$\langle\eta\rangle=0.193$, $\Var(\eta)=5.61\times10^{-2}$,
$\langle\Sigma\rangle=2.38$. Transition counts are correlated through the shared
occupancy and the stationary distribution shifts under perturbation.

\subsection{Enlarging the family: site potentials}\label{sec:sitepot}
In \eqref{eq:M} the dwell-time weighting is determined by the jump weighting
through $\Delta r_i(u)$. Freeing it gives jumps and residence times independent
conjugate parameters,
\begin{equation}
M_{u,V}(\Gamma)=\exp\!\left[\tfrac12\sum_e u_e n_e-\sum_i\tau_i\big(\Delta r_i(u)+V_i\big)\right],
\label{eq:ext}
\end{equation}
and Proposition~\ref{prop:fk} survives with $g_{kl}\to g_{kl}-V_k-V_l$. Note that
$\Ex_0[M_{u,V}]\neq1$, so these are not likelihood ratios of any probability
measure; by Remark~\ref{rem:testfunctions} that does not matter. Because the
score acquires dwell-time components $s_{V,i}=-(\tau_i-\langle\tau_i\rangle)$,
the blind hyperplane must be recomputed in the full parameter space: it is
codimension one in $\mathbb R^{|E|+|V|}$ rather than in $\mathbb R^{|E|}$, and
that extra room is where much of the gain comes from.

\subsection{A two-channel machine with no shared states}\label{sec:twochannel}
Two independent reaction channels, each obeying local detailed balance: a slip
channel consuming $\Delta\mu$ and producing nothing, and a coupled channel
consuming $\Delta\mu$ and producing $w$. Four directed counters, each Poisson,
so $\eta=w\,n_c/[\Delta\mu(n_s+n_c)]$.

This is a deliberately different test. The cycles do not share a state space and
the counters are independent, against the chain's shared occupancy. Most
importantly, \emph{there is no state space, hence no residence times, hence no
site potentials}: the enlargement of Section~\ref{sec:sitepot} is structurally unavailable.
What plays its role is the symmetric part of the counter tilts, so ``two
antisymmetric parameters'' versus ``four independent'' is the two-channel
counterpart of ``edge tilts'' versus ``edge tilts plus site potentials''.

\begin{remark}[Matching the regime, not the parameters]
\label{rem:regime}
The model's original parameters use $\mathcal T=200$, which puts of order a
thousand events in a window. There, $\langle\Sigma\rangle=2681$; $\chi^{2}$ at
$|u|\sim0.25$ is about $e^{12.7}\approx3\times10^{5}$, so that even
$\Ex_0[M_u]=1$ cannot be estimated from $4\times10^{5}$ samples (we measured
$0.96$, $0.16$, $0.90$, $1.01$); and Cram\'er--Rao already reaches $99.9\%$ of
$\Var(\eta)$. Comparing like with like against the chain means matching events
per window, not nominal parameters. Section~\ref{sec:sweep} runs at $\mathcal T\in[1,8]$ and
keeps the long-window case deliberately.
\end{remark}

\section{Results}\label{sec:results}
All bounds below are computed by a split-sample protocol: coefficients are fitted
on the first half of the trajectories and the bound is evaluated on the second
half as a single-direction bound $\Cov(\eta,c\!\cdot\!M)^{2}/\Var(c\!\cdot\!M)$,
which is valid for any fixed $c$ by Cauchy--Schwarz and therefore cannot be
inflated by the fitting nor exceed $\Var(\eta)$.

\subsection{Blind directions, and what the test points buy}\label{sec:blindres}
\begin{figure}[H]
\centering
\includegraphics[width=\textwidth]{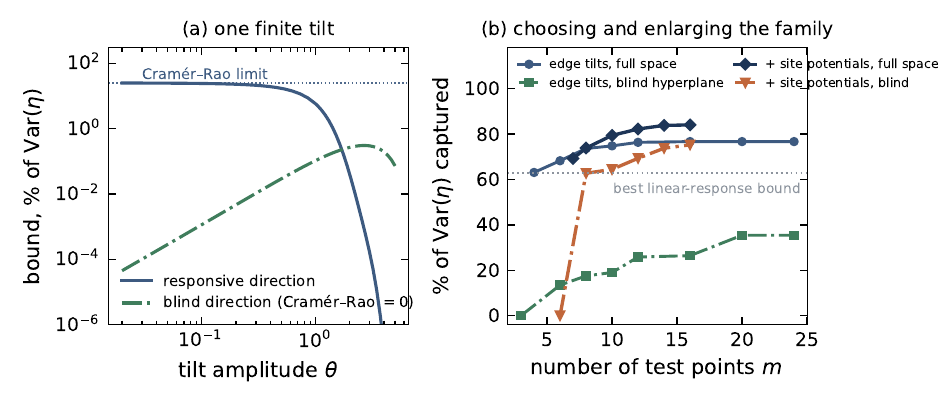}
\caption{(a) One finite perturbation of the three-state model. Along a
responsive direction the bound sits at its Cram\'er--Rao value (dotted) for small
$\theta$ and collapses beyond $\theta\sim1$. Along a blind direction
Cram\'er--Rao is exactly zero; the bound rises as $\theta^{2}$ from the origin,
as Proposition~\ref{prop:blind} requires, peaks near $\theta=2.5$ and falls.
(b) Several test points together. The horizontal line is the best
linear-response bound. Edge tilts alone reach $76.7\%$ in the full space and
$35.4\%$ on the blind hyperplane; adding site potentials raises these to $84.0\%$
and $75.2\%$}
\label{fig:blind}
\end{figure}

On the blind hyperplane the multiparameter Cram\'er--Rao bound is
$8\times10^{-26}$, and every design used satisfies
$\max_u|v\cdot u|\le3\times10^{-16}$, so the constraint is exact rather than
approximate. The bound there climbs from $0.297\%$ for one point to $15.4\%$ for
24 points chosen greedily from a random pool, $35.4\%$ with matching pursuit, and
$75.2\%$ once site potentials are admitted.

\subsection{Both models, across parameters}\label{sec:sweep}
\begin{table}[H]
\centering
\caption{Three-state model, extended family, $m=10$, best of the three selection
rules. CR is the multiparameter Cram\'er--Rao bound over edge tilts, i.e. the
best linear-response bound; ``gain'' is the ratio of the last column to it}
\label{tab:chain}
\begin{tabular}{lrrrrl}
\toprule
case & CR (\%) & CR extended (\%) & best (\%) & gain & winning rule\\
\midrule
baseline & 62.6 & 68.9 & 86.0 & $1.37\times$ & joint\\
strong drive & 73.2 & 78.6 & 94.4 & $1.29\times$ & dictionary\\
weak drive & 58.3 & 64.4 & 79.1 & $1.36\times$ & dictionary\\
near stall & 53.6 & 58.0 & 84.2 & $1.57\times$ & dictionary\\
fast futile cycle & 59.7 & 66.4 & 82.9 & $1.39\times$ & joint\\
short window & 61.2 & 72.8 & 93.8 & $1.53\times$ & joint\\
long window & 64.7 & 66.6 & 81.0 & $1.25\times$ & greedy\\
\bottomrule
\end{tabular}
\end{table}

\begin{table}[H]
\centering
\caption{Two-channel model, $m=10$, best of the three selection rules. CR$_2$ is
the multiparameter Cram\'er--Rao bound over the two antisymmetric directions,
CR$_4$ over all four counters. The last column is the bound on the blind
hyperplane, where both Cram\'er--Rao bounds are zero}
\label{tab:tc}
\begin{tabular}{lrrrrlr}
\toprule
case & CR$_2$ (\%) & CR$_4$ (\%) & best (\%) & gain & & blind (\%)\\
\midrule
very short, $\mathcal T=1$ & 66.8 & 68.1 & 92.0 & $1.38\times$ & & 77.5\\
baseline, $\mathcal T=2$ & 74.6 & 75.3 & 95.8 & $1.28\times$ & & 77.9\\
weak drive, $\mathcal T=2$ & 61.3 & 61.8 & 79.8 & $1.30\times$ & & 38.3\\
near stall, $\mathcal T=2$ & 52.9 & 53.4 & 72.3 & $1.37\times$ & & 50.3\\
strong drive, $\mathcal T=2$ & 97.9 & 97.9 & 99.9 & $1.02\times$ & & 99.1\\
medium, $\mathcal T=8$ & 96.6 & 96.6 & 99.7 & $1.03\times$ & & 54.4\\
long, $\mathcal T=60$ & 99.7 & 99.7 & 99.8 & $1.00\times$ & & \textbf{0.0}\\
\bottomrule
\end{tabular}
\end{table}

The two models agree where they should: gains of $1.25$--$1.57$ on the chain
against $1.28$--$1.38$ on the two-channel model in its comparable regime, and
blind recovery of $75.2\%$ against $77.9\%$, with Cram\'er--Rao zero to machine
precision in both.

\subsection{Where it stops working}\label{sec:limits}
They also disagree in two ways, and both are more informative than the agreement.

\emph{The gain is a short-window, far-from-Gaussian effect.} Read
Table~\ref{tab:tc} down the window column, together with Fig.~\ref{fig:tc}. As $\mathcal T$
grows the counts become Gaussian, the score becomes linear, and the projection
onto it captures essentially everything. Cram\'er--Rao climbs from $66.8\%$ to
$99.7\%$, the gain falls from $1.38\times$ to $1.00\times$, and the blind
hyperplane yields \emph{exactly nothing}. This is not a peculiarity of one model:
it explains why the chain's largest gains are at short window and near stall and
its smallest at long window (Table~\ref{tab:chain}), and it means every gain
quoted for this construction must carry its regime.

\begin{remark}[The scaling that keeps the divergence alive]
\label{rem:lan}
The collapse is a statement about \emph{fixed} perturbation amplitude. Since
$\chi^{2}+1$ is an exponential of a time integral, at fixed $u$ it grows
exponentially in $\mathcal T$; keeping it finite requires the diffusion scaling
$|u|\sim\mathcal O(\mathcal T^{-1/2})$ familiar from local asymptotic normality.
Measured on the three-state model along a fixed direction: at $|u|=0.3$,
$\chi^{2}$ runs $1.5\times10^{-2},\,6.0\times10^{-2},\,0.25,\,1.40,\,37.6$ for
$\mathcal T=2,8,30,120,500$, while at $|u|=0.85\,\mathcal T^{-1/2}$ it runs
$0.054,\,0.061,\,0.065,\,0.068,\,0.069$, effectively constant. But in that
scaled regime the perturbation is infinitesimal in the relevant sense and the
bound returns to Cram\'er--Rao. Finite-tilt projections are therefore a tool for
mesoscopic records, short enough that the central limit theorem has not yet
erased the non-Gaussianity they exploit.
\end{remark}

\begin{figure}[H]
\centering
\includegraphics[width=\textwidth]{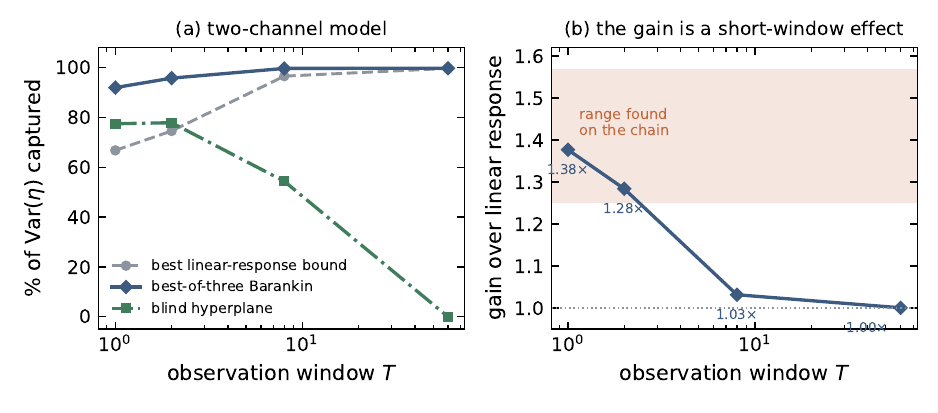}
\caption{(a) Two-channel model against observation window. As the window grows
the counts become Gaussian, the best linear-response bound approaches
$\Var(\eta)$, and the blind hyperplane (where that bound is identically zero)
yields exactly nothing. (b) The same as a ratio. The shaded band is the range
of gains found on the three-state model}
\label{fig:tc}
\end{figure}

\emph{The site-potential enlargement does not transfer.} On the chain it is the
largest single improvement in this work: $62.6\%\to68.9\%$ in Cram\'er--Rao and
$76.7\%\to84.0\%$ in the bound. On the two-channel model the analogous
enlargement, from two antisymmetric directions to four independent counters, is
worth $74.6\%\to75.3\%$ in Cram\'er--Rao and nothing in the bound, which is
$95.8\%$ (Table~\ref{tab:tc}). A separate run of the same three selection rules,
comparing the two sets of directions directly, gives $95.87\%$ for both; the
difference from the table is the run-to-run variation of the optimisers, about
$0.1$ percentage points.

The obvious explanation, that the extra directions fail to couple to $\eta$, is
wrong and in fact points the wrong way: the symmetric directions of the
two-channel model couple to $\eta$ nearly three times more strongly, relative to
the existing ones, than the site potentials of the chain, and deliver the
smaller gain (Appendix~\ref{sec:diagnostic}). What the extra directions must
supply is span that finite tilts of the existing directions cannot already
reach, and we know of no cheap test for it in advance.

\section{What coarse data cannot certify, and what counted data can}\label{sec:measure}

Sections~\ref{sec:design}--\ref{sec:results} measure what the construction recovers when
the model is known. This section asks what survives when it is not, and gives
the answer in two parts: a negative one for coarse data and a positive one for
edge-resolved records.

\subsection{Coarse data bound $\chi^{2}$ only from below}\label{sec:inference}
Every percentage reported above is obtained with test points chosen by
optimising against $\eta$ evaluated trajectory by trajectory. The split-sample
protocol of Appendix~\ref{sec:protocol} makes those numbers statistically sound, because the
coefficients never see the half on which the bound is scored, but it does not
make them useful to anyone who does not already hold the record. A record fine
enough to choose the design is a record fine enough to estimate $\Var(\eta)$
directly. We now ask what remains when that is not assumed.

Read at a single test point, the bound is

\begin{equation}
\Var(\eta)\;\ge\;\frac{\bigl[\langle\eta\rangle_u-\langle\eta\rangle\bigr]^{2}}
{\chi^{2}(u)} ,
\label{eq:oneshot}
\end{equation}

and the numerator is exactly what a coarse experiment can deliver: perturb the
machine by a known finite amount, measure the mean efficiency before and after.
No single trajectory need be resolved. The denominator is the difficulty, since
$\chi^{2}$ is a functional of the generator. For \eqref{eq:oneshot} to infer
anything, $\chi^{2}$ would have to be bounded \emph{above} by something
measurable.

Table~\ref{tab:oneshot} evaluates \eqref{eq:oneshot} on the three-state model
along a fixed direction, with $\chi^{2}$ computed exactly from
Proposition~\ref{prop:fk} and both means from $2\times10^{5}$ trajectories.

\begin{table}[H]
\centering
\caption{The measurable form \eqref{eq:oneshot} on the three-state model,
$\Delta\mu=4$, $w=2$, fixed tilt direction, $|u|=0.3$}
\label{tab:oneshot}
\begin{tabular}{rrrrrr}
\toprule
$\mathcal T$ & $\langle\eta\rangle$ & $\Delta\langle\eta\rangle$ & $\chi^{2}$ &
bound & \% of $\Var(\eta)$ \\
\midrule
 2 & 0.0938 & $-0.0080$ & $1.50\times10^{-2}$ & $4.31\times10^{-3}$ & 11.44 \\
 4 & 0.1366 & $-0.0077$ & $2.99\times10^{-2}$ & $1.96\times10^{-3}$ &  4.04 \\
 8 & 0.1940 & $-0.0075$ & $6.04\times10^{-2}$ & $9.21\times10^{-4}$ &  1.64 \\
16 & 0.2439 & $-0.0035$ & $1.24\times10^{-1}$ & $9.96\times10^{-5}$ &  0.18 \\
32 & 0.2732 & $-0.0024$ & $2.64\times10^{-1}$ & $2.23\times10^{-5}$ &  0.05 \\
\bottomrule
\end{tabular}
\end{table}

Eleven per cent at the shortest window, one part in two thousand by
$\mathcal T=32$. Optimising the direction would raise the first column somewhat,
but choosing that direction requires the generator, which is the obstruction
again rather than a way around it.

What matters in the table is the pattern. The measurable quantity
$\Delta\langle\eta\rangle$ is a difference of time-averaged means and changes
only slowly with $\mathcal T$ (by a factor of about three across the table),
while $\chi^{2}$ grows by more than an order of magnitude. The same two laboratory readings are therefore
consistent with bounds differing by orders of magnitude, according to a model
quantity the experiment cannot see. This is not an accident of this model or
this direction. Inequality \eqref{eq:oneshot}, read the other way, says that
measurable data bound $\chi^{2}$ from \emph{below}; that is what the
Hammersley--Chapman--Robbins construction is. Nothing in the data bounds it from
above, and a lower bound on a divergence cannot be inverted into the upper bound
\eqref{eq:oneshot} needs.

We therefore state the scope plainly. What is developed here is a computational
instrument for a model in hand: given a generator, it measures how much of a
variance finite-tilt projections recover, and exhibits directions along which
every linear-response bound is identically zero while this one is not. It is
not an uncertainty relation in the sense that makes those relations useful, and
Section~\ref{sec:dissipation} already showed why the obvious repair (paying for the perturbation in
dissipation) is unavailable. A reader looking for a bound that converts coarse
measurements into a statement about fluctuations will not find one here.

\subsection{Counted data: what an experiment would need}\label{sec:experiment}

Stochastic efficiencies have been measured in a colloidal Carnot
engine~\cite{Martinez2016}, but the bound needs more than a record of energies.
The construction requires $M_u$ per trajectory, and \eqref{eq:M} needs the
individual rates through $\Delta r_i(u)$. Those are countable: the
maximum-likelihood estimate of $w_{ij}$ from a trajectory record is
$\hat w_{ij}=N_{ij}/\mathcal{T}_i$, the number of $i\to j$ transitions over the
total time spent in $i$, both of which an edge-resolved single-molecule record
reports.

\begin{proposition}[Consistency of the counted bound]
\label{prop:count}
Let $\{\Gamma_n\}_{n=1}^{N}$ be i.i.d. steady-state trajectories of duration
$\mathcal T$ and let $\hat M_u$ be \eqref{eq:M} with $w$ replaced by $\hat w$.
Then $\hat G_{kl}\to G_{kl}$ almost surely.
\end{proposition}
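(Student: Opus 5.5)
The plan is to treat $\hat G_{kl}$ as the empirical covariance of the plug-in test functions, $\hat G_{kl}=\frac1N\sum_n \hat M_{u_k}(\Gamma_n)\hat M_{u_l}(\Gamma_n)-\bar{\hat M}_{u_k}\bar{\hat M}_{u_l}$, and to split its error into a sampling part and a plug-in part. First I settle how $\hat w$ is pooled. I take it across the whole sample, $\hat w_{ij}=\sum_n N_{ij}(\Gamma_n)/\sum_n \tau_i(\Gamma_n)$. By the strong law applied to the i.i.d. integrable pairs $(N_{ij},\tau_i)$, and by stationarity ($\Ex_0 N_{ij}=p_{0,i}w_{ij}\mathcal T$, $\Ex_0\tau_i=p_{0,i}\mathcal T$), this gives $\hat w_{ij}\to w_{ij}$ almost surely for every state with $p_{0,i}>0$. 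On a finite state space with an irreducible chain that is every state.

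Second, I view $M_u(\Gamma;w)$ in \eqref{eq:M} as a function of the record $(n,\tau)$ and of the rate vector $w$. It depends on $w$ only through $\Delta r_i(u)$, which is linear in $w$. So for $w'$ in a compact neighbourhood $K$ of $w$,
\[
|M_u(\Gamma;w')-M_u(\Gamma;w)|\le \|w'-w\|\,C_u\,\mathcal T\,\sup_{w''\in K}M_u(\Gamma;w''),
\]
with $C_u$ depending only on $u$. Because $\tau_i\le\mathcal T$, the factor $e^{-\sum_i\tau_i\Delta r_i(u;w'')}$ is bounded uniformly over $K$ and over all paths, by $e^{\mathcal T c_K}$. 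Hence $\sup_{w''\in K}M_u(\Gamma;w'')\le e^{\mathcal T c_K}\exp(\tfrac12\sum_e u_en_e)$, and the products $M_{u_k}M_{u_l}$ obey the same kind of envelope.

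Third, I decompose the error as
\[
\hat G_{kl}-G_{kl}=\big[\hat G_{kl}-\tilde G_{kl}\big]+\big[\tilde G_{kl}-G_{kl}\big],
\]
where $\tilde G$ is the empirical covariance computed with the true $w$. The second bracket tends to zero almost surely by the strong law, provided $\Ex_0[M_{u_k}M_{u_l}]<\infty$, which Proposition~\ref{prop:fk} gives. For the first bracket, on the almost-sure event that $\hat w$ eventually lies in $K$, the Lipschitz bound turns it into at most $\|\hat w-w\|$ times an empirical average of the envelope $D(\Gamma)=e^{2\mathcal T c_K}\exp\!\big(\tfrac12\sum_e(|u_k|+|u_l|)_e n_e\big)$. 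The remaining question is whether $D$ is integrable.

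This integrability of the envelope is the main obstacle. Exponential moments of the jump counts over a finite window are all finite, because the total number of jumps is dominated by a Poisson variable with rate $\max_i r_i$ and time $\mathcal T$. That is the same fact that makes the Feynman--Kac expression \eqref{eq:fk} finite for every tilt. So $\frac1N\sum_n D(\Gamma_n)$ converges almost surely to a finite limit, its product with $\|\hat w-w\|\to0$ vanishes, and the first bracket goes to zero. The cross terms coming from the centring are handled the same way using first moments. Together these give $\hat G_{kl}\to G_{kl}$ almost surely. If the rates were instead estimated per trajectory, consistency would fail at fixed $\mathcal T$, so I will state explicitly that the estimator is the pooled one.
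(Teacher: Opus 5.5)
Your proof is correct, and its overall structure matches the paper's. Both use a strong law for the pooled estimator $\hat w_{ij}$. Both then split $\hat G_{kl}-G_{kl}$ into a sampling error at the true rates, handled by the ordinary strong law with $\Ex_0[M_{u_k}M_{u_l}]<\infty$ from Proposition~\ref{prop:fk}, and a plug-in error.

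The difference lies in how you control the plug-in error. The paper notes that $\hat w$ enters \eqref{eq:M} only through the compensator $-\sum_i\tau_i\Delta r_i(u)$, whose coefficients satisfy $0\le\tau_i\le\mathcal T$. Hence $|\log\hat M_u-\log M_u|\le\mathcal T\sum_{ij}|\hat w_{ij}-w_{ij}|\,|e^{u_{ij}/2}-1|$ \emph{uniformly in} $\Gamma$. The error is therefore multiplicative: $\hat M_{u_k}\hat M_{u_l}=M_{u_k}M_{u_l}(1+\epsilon_N)$ with $\sup_\Gamma|\epsilon_N|\to0$. The only integrability needed is then that of $M_{u_k}M_{u_l}$ itself, which Proposition~\ref{prop:fk} already supplies.

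You make the same structural observation but turn it into an additive Lipschitz bound. That forces you to introduce the envelope $D(\Gamma)$, plus a separate lemma (Poisson domination of the total jump count by uniformisation) to show $\Ex_0 D<\infty$. The step is valid, but the multiplicative form makes it unnecessary.

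Your route does buy a few things. It treats the centring terms explicitly; the paper's argument addresses only the uncentred average, although the same ratio bound gives $\bar{\hat M}_u\to1$. It makes pooling of $\hat w$ across trajectories an explicit hypothesis rather than leaving it implicit in the notation $N_{ij}/\mathcal T_i$, and you are right that per-trajectory estimates would not be consistent. It uses the plain i.i.d. strong law instead of one for Markov additive functionals. Finally, your envelope argument would survive a parametrisation in which the estimated rates also entered the jump weights. There $\hat M/M$ would no longer be bounded uniformly over paths, and an exponential-moment envelope like yours would genuinely be needed.
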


\begin{proof}
By the strong law for Markov additive functionals,
$(N\mathcal T)^{-1}\mathcal T_i\to p_{0,i}$ and
$(N\mathcal T)^{-1}N_{ij}\to p_{0,i}w_{ij}$, so $\hat w_{ij}\to w_{ij}$ almost
surely, uniformly over the finite edge set. The exponent of \eqref{eq:M} is
\emph{linear} in $w$ with coefficients $\tau_i(\Gamma)\le\mathcal T$, so
\[
\big|\log\hat M_u(\Gamma)-\log M_u(\Gamma)\big|
\;\le\;\mathcal T\sum_{ij}\big|\hat w_{ij}-w_{ij}\big|\,\big|e^{u_{ij}/2}-1\big|
\]
uniformly in $\Gamma$; hence $\hat M/M\to1$ uniformly, not merely pointwise. The
sample average $N^{-1}\sum_n M_{u_k}M_{u_l}$ converges by the strong law, and the
difference between it and the plug-in version is bounded by
$\big(\sup|\hat M/M-1|\big)$ times it, which vanishes.
\end{proof}

Two caveats. $\Ex_0[\hat M_u]\neq1$ at finite $N$, so $\hat M_u$ is not a
likelihood ratio; by Remark~\ref{rem:testfunctions} the resulting bound is
nevertheless valid. And the same trajectories estimate $\hat w$ and evaluate the
bound, which sample-splitting removes.

Numerically the reconstruction is accurate well beyond what the rate errors
suggest. On the three-state model, with the slowest rate misestimated by $2.6\%$:

\begin{table}[H]
\centering
\caption{The bound computed from counted rates against the bound computed from
the true rates, three-state model, along the optimal direction}
\label{tab:count}
\begin{tabular}{lcccc}
\toprule
$|u|$ & $\chi^{2}$ (true) & $\chi^{2}$ (counted) & bound (true) & bound (counted)\\
\midrule
0.083 & 0.00206 & 0.00206 & $3.54146\times10^{-2}$ & $3.54118\times10^{-2}$\\
0.167 & 0.00812 & 0.00812 & $3.54837\times10^{-2}$ & $3.54810\times10^{-2}$\\
0.333 & 0.03178 & 0.03176 & $3.52121\times10^{-2}$ & $3.52097\times10^{-2}$\\
\bottomrule
\end{tabular}
\end{table}

So the denominator does not have to become thermodynamic in order to become
measurable; it has to become counted. For a rotary motor such as
$F_1$-ATPase the accounting closes without a kinetic model: the chemical
potential difference is fixed by the buffer, the hydrolysis number is counted,
and the work is read from the bead, so that $T_0S=n\Delta\mu-W$ realisation by
realisation~\cite{Toyabe2011}. The edge-resolved counts required here are of the
same character. For kinesin, Ariga, Tomishige and Mizuno report that the
measured heat and work together fall short of the input free-energy
change~\cite{Ariga2018}, a reminder that closing such balances is hard in
practice.

\section{An exactly solvable diffusion}\label{sec:diffusion}
The models of Section~\ref{sec:models} are finite-state, and every number in Section~\ref{sec:results} is
numerical. For a diffusion the construction can be carried through exactly in
one case, and three of the numerical observations above then become formulas:
the diffusion scaling of Remark~\ref{rem:lan}, the short-window optimum of
Section~\ref{sec:limits}, and the saturation of test-point placement studied in Appendix~\ref{sec:placement}.
The diffusion also brings in a constraint that finite-state models do not have:
a drift perturbation must keep the process confined.

\subsection{The Gram matrix on diffusion path space}\label{sec:ougram}
For a diffusion $dx=b(x)\,dt+\sqrt{2D}\,dW$ with constant $D$ and a drift tilt
$b\to b+u$, Girsanov gives
$M_u=\exp\big[\int u^{\top}(2D)^{-1/2}dW-\tfrac14\int u^{\top}D^{-1}u\,dt\big]$,
and the proof of Proposition~\ref{prop:fk} goes through line by line:
\begin{equation}
G_{kl}+1=\Ex_{u_k+u_l}\!\left[\exp\!\int_0^{\mathcal T}g_{kl}(x_t)\,dt\right],
\qquad g_{kl}=\tfrac12\,u_k^{\top}D^{-1}u_l .
\label{eq:diffusion}
\end{equation}
The potential is the Cameron--Martin cross term, and it is the small-tilt limit
of the jump potential of \eqref{eq:fk}. On a one-dimensional lattice of spacing
$h$ approximating the diffusion, the rates are of order $D/h^{2}$ and a drift
tilt enters as $u_{ij}/2=\pm u\,h/(2D)$, so
$\sum_j w_{ij}(e^{u_{k,ij}/2}-1)(e^{u_{l,ij}/2}-1)\to u_ku_l/(2D)$ as $h\to0$.
The Gram matrix becomes a Feynman--Kac semigroup of a differential operator
rather than a matrix exponential, and Proposition~\ref{prop:blind} carries over
formally, since $\chi_1$ is still linear in $u$. Kinetic tilts have no
diffusive counterpart with finite $\chi^{2}$: changing $D$ changes the quadratic
variation, and the two path measures are mutually singular. We checked
\eqref{eq:diffusion} against $1.5\times10^{5}$ Euler--Maruyama paths of an
Ornstein--Uhlenbeck process, for constant fields and for the space-dependent
pair $u=ax$, $v=c\sin x$; the relative deviations, $4\times10^{-4}$ to
$1.5\times10^{-2}$, are two to three times the Monte Carlo standard error and do
not shrink with the time step.

\subsection{The Ornstein--Uhlenbeck process and a quadratic observable}\label{sec:ou}
Take $dX=-\gamma X\,dt+\sqrt{2D}\,dW$ started from its stationary law
$\mathcal N(0,D/\gamma)$, the observable
$O=\mathcal T^{-1}\int_0^{\mathcal T}X_t^{2}\,dt$, and write $y=\gamma\mathcal T$.
A constant force $u$ is a blind direction: $O$ is even under $x\to-x$ and the
perturbation is odd, so the first-order response vanishes. The response is in
fact exactly quadratic,
\begin{equation}
\Delta\langle O\rangle(u)=\frac{u^{2}}{\gamma^{2}}R(y),
\qquad
R(y)=1-\frac{2}{y}\big(1-e^{-y}\big)+\frac{1}{2y}\big(1-e^{-2y}\big),
\label{eq:ouR}
\end{equation}
where $R$ is the transient cost of switching the force on at $t=0$. The
divergence and the variance are also exact:
\begin{equation}
\chi^{2}(u)=e^{\sigma^{2}u^{2}}-1,\quad \sigma^{2}=\frac{\mathcal T}{2D},
\qquad
\Var_0(O)=\frac{2D^{2}}{\gamma^{4}\mathcal T^{2}}\,\Psi(y),\quad
\Psi(y)=y-\tfrac12\big(1-e^{-2y}\big).
\label{eq:ouchi}
\end{equation}

\subsection{The optimal tilt}\label{sec:outilt}
With $c=R(y)/\gamma^{2}$ and $z=\sigma^{2}u^{2}$ the single-point bound is
\begin{equation}
\mathcal B_1(u)=\frac{[\Delta\langle O\rangle]^{2}}{\chi^{2}(u)}
=\frac{c^{2}}{\sigma^{4}}\,\phi(z),
\qquad \phi(z)=\frac{z^{2}}{e^{z}-1},
\label{eq:ouB1}
\end{equation}
maximised at the root of $e^{z}(2-z)=2$,
$z^{*}=2+\mathrm W_0(-2e^{-2})=1.593624$~\cite{Corless1996}, where
$\phi(z^{*})=0.647610$. The optimal amplitude is therefore
\begin{equation}
u^{*}=\sqrt{2Dz^{*}/\mathcal T}\;\propto\;\mathcal T^{-1/2},
\label{eq:oustar}
\end{equation}
which is the scaling measured numerically in Remark~\ref{rem:lan}, here obtained
as an exact optimum. The optimum is interior only because the direction is
blind. Along a direction that linear response sees, the numerator of
\eqref{eq:ouB1} is quadratic in $u$ rather than quartic, the ratio becomes
proportional to $z/(e^{z}-1)$, which decreases monotonically, and the best
single point is the limit $u\to0$, i.e.\ Cram\'er--Rao.

\subsection{The supremum, and what it is}\label{sec:ousup}
Along the constant-force axis the path enters only through one statistic:
$M_u=\exp(uY-\tfrac12u^{2}\sigma^{2})$ with $Y=W_{\mathcal T}/\sqrt{2D}\sim
\mathcal N(0,\sigma^{2})$, which is the likelihood ratio of
$\mathcal N(u\sigma^{2},\sigma^{2})$ against $\mathcal N(0,\sigma^{2})$. The
family is therefore a Gaussian location model in disguise, with Gram kernel
$G(u,v)=e^{\sigma^{2}uv}-1$ and response $b(u)=cu^{2}$. In the reproducing-kernel
space of $G$ the monomials are orthogonal with
$\|u^{n}\|^{2}=n!\,\sigma^{-2n}$~\cite{Duttweiler1973}, and $b$ is the single mode
$n=2$, so the Barankin supremum is
\begin{equation}
\mathcal B_\infty=\frac{2c^{2}}{\sigma^{4}},
\qquad
\frac{\mathcal B_\infty}{\mathcal B_1(u^{*})}=\frac{2}{\phi(z^{*})}=3.088277 .
\label{eq:ousup}
\end{equation}
$\mathcal B_\infty$ is the variance at $u=0$ of the unbiased estimator
$c\,(Y^{2}/\sigma^{4}-1/\sigma^{2})$ of $cu^{2}$, as Barankin's theorem says it
must be, and the ratio $3.088$ is the classical gap between the
Chapman--Robbins and Barankin bounds for the square of a Gaussian mean. We claim
no novelty for either constant, only that they appear here exactly. What the
example settles is the placement question of Section~\ref{sec:design}. The best of $300$ random
designs reaches $32.4\%$ of $\mathcal B_\infty$ with one point, $99.88\%$ with
two and $99.997\%$ with three: a single mode needs only enough points to resolve
it, whereas the edge-tilt family of Appendix~\ref{sec:placement} was still climbing at $m=20$.

As a fraction of the variance,
\begin{equation}
\mathcal Q_\infty(y)=\frac{\mathcal B_\infty}{\Var_0(O)}=\frac{4R(y)^{2}}{\Psi(y)},
\qquad
\mathcal Q_1(y)=\frac{\phi(z^{*})}{2}\,\mathcal Q_\infty(y).
\label{eq:ouQ}
\end{equation}

\begin{table}[H]
\centering
\caption{Recovered fraction of $\Var_0(O)$ for the Ornstein--Uhlenbeck process
along the constant-force axis, where the Cram\'er--Rao bound is identically
zero. Exact}
\label{tab:ouQ}
\begin{tabular}{rrr}
\toprule
$y=\gamma\mathcal T$ & $\mathcal Q_1$ & $\mathcal Q_\infty$ \\
\midrule
 0.50 &  2.39\% &  7.38\% \\
 1.00 &  6.45\% & 19.91\% \\
 2.00 & 12.44\% & 38.43\% \\
 3.61 & 14.96\% & 46.20\% \\
 8.00 & 11.40\% & 35.22\% \\
20.00 &  5.68\% & 17.55\% \\
\bottomrule
\end{tabular}
\end{table}

Both fractions vanish as $y^{3}$ at short windows and decay as $1/y$ at long
ones, with a maximum of $46.2\%$ at $y=3.606$ (Table~\ref{tab:ouQ}). This is the
analytic form of the short-window effect of Section~\ref{sec:limits}: the construction is
worth most when the window is a few relaxation times. Split-sample simulation
by the protocol of Appendix~\ref{sec:protocol} gives $13.33\%$ against a predicted $12.95\%$ for
$\mathcal Q_1$ at $y=2.14$, and $13.99\%$ against $13.66\%$ at $y=5.60$, the
residual being the Euler bias of the simulated response. The remaining $53.8\%$
of the variance is out of reach of the whole constant-force family, because $O$
is not a function of $Y$; reaching it requires space-dependent fields, and those
must confine.

Sampling at interval $\Delta t$ changes only the exponent: the exact discrete
likelihood gives $\chi^{2}_{\Delta}(u)=\exp[z\,\mathcal S(\gamma\Delta t)]-1$ with
$\mathcal S(x)=\tanh(x/2)/(x/2)$, so $z^{*}$ and the ratio \eqref{eq:ousup} are
unchanged and the optimal amplitude acquires the factor $\mathcal S^{-1/2}$. This
agrees with $4\times10^{5}$ sampled paths per seed, eight seeds, within one
standard error at $\gamma\Delta t=0.07$, $0.28$ and $0.98$.

\subsection{Confinement}\label{sec:confine}
On a finite state space no rate tilt can destabilise the process. A drift field
can: if it fails to confine, the perturbed process explodes and the stationary
law around which the response is expanded does not exist. For a linear
background drift $-A x$ the perturbed process is non-explosive and ergodic under
the drift condition~\cite{MeynTweedie1993}
\begin{equation}
\limsup_{|x|\to\infty}\frac{x\cdot u(x)}{|x|^{2}}<\gamma_{\min}(A),
\label{eq:cone}
\end{equation}
and the fields satisfying it form a convex cone, not a subspace. It excludes
the natural candidates: a Hermite mode $u\propto\psi_n$ grows as $x^{n}$, so
$xu(x)\to+\infty$ along one ray for every $n\ge2$. An admissible blind direction
does exist. For $u(x)=c_1x-c_3x^{3}$, with $\sigma_0^{2}=D/\gamma$, the
finite-window susceptibility is
\begin{equation}
\chi_1(\mathcal T)=\frac{\sigma_0^{2}}{\gamma}\,\big(c_1-3\sigma_0^{2}c_3\big)
\left[1-\frac{1-e^{-2\gamma\mathcal T}}{2\gamma\mathcal T}\right],
\label{eq:chi1T}
\end{equation}
which follows from It\^o's formula for $\Ex[X_t^{2}]$ at first order. It
vanishes at every window if and only if $c_1=3\sigma_0^{2}c_3$, i.e.\ for
$u\propto-(x^{3}-3\sigma_0^{2}x)$, a cubic Hermite polynomial. For $c_3>0$ this
field adds confinement; for $c_3<0$ the process explodes. Simulation at
$\gamma=1.3$, $D=0.6$ confirms both properties: over $1.2\times10^{5}$ paths
with $0<c_3\le0.3$ the process never leaves $|x|<3.6$, and finite differences
in the amplitude, extrapolated to zero, give $\chi_1$ within $0.4$ standard
errors of zero at $\mathcal T=1$ and $\mathcal T=4$.

A one-sided family costs little in principle and a great deal in practice. To
isolate the effect, restrict the constant-force family of Section~\ref{sec:ousup} to $u>0$,
as a cone would. The span of $\{G(\cdot,u):u\in\mathcal I\}$ is dense for any open
interval $\mathcal I\subset\mathbb R^{+}$ (the kernel sections are entire
functions, and a function orthogonal to all of them vanishes on a set with an
accumulation point), so the supremum is unchanged. But at $\sigma^{2}=c=1$, two
points placed symmetrically at $\pm u$ already reach $99.8\%$ of
$\mathcal B_\infty$ with $\kappa(G)=13$, because
$\tfrac12(M_u+M_{-u}-2)$ isolates the quadratic mode by parity, whereas equally
spaced points on $[0.4,2.4]$ need eight to reach the same fraction, at
$\kappa(G)=1.5\times10^{10}$. The ill-conditioning of Section~\ref{sec:trap} returns here
in exact form: an unregularised search over node positions on the ray returns
designs that exceed $\mathcal B_\infty$, which, as in Section~\ref{sec:trap}, is an
estimator failure and not a bound. The scripts for this section are part of
the replication package~\cite{FarihCode2026}.

\section{Discussion}\label{sec:discussion}
\subsection{What is established}\label{sec:established}
That an observable can be invisible to linear response and still be bounded. The
set of perturbation directions on which the first-order susceptibility vanishes
is a hyperplane, not a set of isolated accidents, and on it every bound built
from the Fisher information is identically zero. Proposition~\ref{prop:blind}
shows the second-order susceptibility takes over, and Sections~\ref{sec:design}--\ref{sec:results} show the
effect is large: three quarters of $\Var(\eta)$ on the chain, $77.9\%$ on the
two-channel model. We are not aware of a treatment of this situation (an
observable whose linear response vanishes on a whole family of perturbations,
requiring a finite-perturbation bound) anywhere in the stochastic
thermodynamics literature.

That efficiency is the natural example is not a coincidence, and
Proposition~\ref{prop:currents} says why. For any observable linear in the
trajectory record (every current, every residence time) the multiparameter
Cram\'er--Rao bound over rate and site perturbations already equals the
variance, so the question asked here is empty. It is non-empty exactly for
observables that are nonlinear in the record, and a ratio of two currents is
among the simplest such observables of thermodynamic interest. Its evenness under path
reversal, which places it outside the standard uncertainty relations, is a
second and independent reason.

The effect is not confined to three states. Embedded in a ring of up to sixteen
states, with a $33$-dimensional perturbation space, the same motor keeps a gain
of about $1.3$ over the best linear-response bound and about three quarters of
its variance on the blind hyperplane (Table~\ref{tab:largeN}).

\subsection{What is bounded}\label{sec:bounded}
Four limitations are structural rather than technical.

The construction is a short-window, far-from-Gaussian tool (Section~\ref{sec:limits}). It has
nothing to offer a machine observed long enough for its currents to become
Gaussian, and this should be stated wherever a gain is quoted.

It is not an inference tool (Section~\ref{sec:inference}). The designs that produce the
percentages reported here are fitted to $\eta$ evaluated trajectory by
trajectory, and a record that supports that fitting already supports estimating
$\Var(\eta)$ outright. The form an experiment could evaluate without
trajectory resolution requires an upper bound on $\chi^{2}$, which measurable
data do not supply: data bound a divergence from below, which is what the
Hammersley--Chapman--Robbins construction is in the first place. We regard this
as the sharpest limitation of the work, and we do not see a repair within the
present framework.

The cost is $\chi^{2}$, not dissipation, and Remark~\ref{rem:impossible} shows it
cannot be otherwise at finite perturbation. This is a real departure from the
thermodynamic uncertainty relation: what limits the achievable precision here is
the statistical distinguishability of the perturbed dynamics, and dissipation
enters only through the $\theta\to0$ limit.

Test-point placement is unsolved. We give three rules, none of which dominates
(Table~\ref{tab:select}). In the Gaussian example of Section~\ref{sec:diffusion} the question has
a complete answer, three points, but only because the response there is a single
mode of the kernel. Appendix~\ref{sec:placement} measures what that costs (about $1.2$
percentage points of standard deviation across random designs) and where the
edge-tilt family saturates, but the extrapolated ceiling there is a fit and not
a theorem, and we still do not know where the achievable bound saturates on the
blind hyperplane. We also have no criterion for when enlarging the family of test
functions will help: Section~\ref{sec:limits} records a plausible diagnostic that turns out to
point the wrong way. And our dictionary rule discretises Marzetta's continuum
formulation \eqref{eq:marzetta} rather than solving it; the integral equation for
the optimal weight function is the natural thing to race against matching
pursuit, and remains untried thirty years after it was posed.

\subsection{Open}\label{sec:open}
The most useful next step is a continuous-state model beyond the Gaussian case
of Section~\ref{sec:diffusion}: a nonlinear diffusion, where the Gram matrix
\eqref{eq:diffusion} has no closed form, or a non-equilibrium one with a
perturbation that is both blind and admissible. The counting reconstruction of
Section~\ref{sec:experiment} must there be replaced by estimates of $b$ and $D$ from sampled
increments, which is a statistical problem of a different kind. Whether blind
directions recover as large a share of the variance there as they do here is
the question we would most like answered.

A second question is quantitative. Barankin's theorem guarantees that the
supremum over test points is $\Var(O)$ exactly, so the gap between $75\%$ and
$100\%$ on the blind hyperplane is a statement about our search, not about the
family. A placement theory, or even a saturation rate in $m$, would convert
these numbers from measurements into predictions.

\appendix
\renewcommand{\theequation}{\thesection.\arabic{equation}}
\setcounter{equation}{0}
\renewcommand{\thetable}{\thesection.\arabic{table}}
\setcounter{table}{0}

\section{Numerical protocol and verification}\label{sec:protocolapp}
\subsection{Protocol}\label{sec:protocol}
The code implementing everything described in this appendix, together with the
transcripts of the runs reported in the paper, is deposited at
Ref.~\cite{FarihCode2026}.

Trajectories are generated by a vectorised Gillespie algorithm started from the
stationary distribution. Every bound reported in the paper is obtained as
follows. The trajectory ensemble is split in half. Test points are chosen and
coefficients $c$ fitted on the first half; the reported value is then evaluated
on the \emph{second} half as
\begin{equation}
\frac{\Cov(\eta,\,c\!\cdot\!M)^{2}}{\Var(c\!\cdot\!M)},
\label{eq:protocol}
\end{equation}
which by Cauchy--Schwarz is a valid lower bound on $\Var(\eta)$ for
\emph{any} fixed $c$. Two properties follow that make the numbers hard to
inflate: the reported value cannot exceed $\Var(\eta)$, and it cannot be raised
by overfitting the design, because the coefficients were never allowed to see
the data on which the value is computed.

Coefficients are obtained from a truncated eigendecomposition of $G$ after
standardising the columns to unit norm. The standardisation is not optional; see
Section~\ref{sec:trap}.

\subsection{Checks performed}\label{sec:checks}
\begin{table}[H]
\centering
\caption{Verification of the analytic results against independent computation}
\label{tab:checks}
\small
\begin{tabular}{p{0.42\textwidth}p{0.24\textwidth}p{0.26\textwidth}}
\toprule
claim & check & result\\
\midrule
Gram matrix \eqref{eq:fk}, diagonal & direct simulation, $N=4\times10^{5}$ &
relative error $\le10^{-4}$ for $\theta\in[0.05,1]$\\
Gram matrix \eqref{eq:fk}, off-diagonal & direct simulation &
$7\times10^{-3}$\\
$\chi^{2}(\theta)/\theta^{2}\to\mathcal I_F$ & $\theta=10^{-2},10^{-3},10^{-4}$ &
ratio $0.9939$, $0.9994$, $0.99994$\\
\eqref{eq:fk} on a one-state space & two-channel model, closed form vs.
simulation & $2.3\%$; and $\Ex_0[M_k]$ within $10^{-3}$ of $1$\\
Gradient \eqref{eq:grad} & central differences, $h\in[10^{-7},10^{-2}]$ &
$6\times10^{-9}$ relative, $0.0000^{\circ}$ angle\\
Blind constraint $u^{\top}v=0$ & every design used &
$\max|u^{\top}v|\le3\times10^{-16}$\\
Cram\'er--Rao on $\Hb$ & both models & $8\times10^{-26}$ (chain),
$1.3\times10^{-21}\%$ of $\Var(\eta)$ (two-channel)\\
Proposition~\ref{prop:currents} & three-state model, two currents and a
residence time & Cram\'er--Rao $=100.0000\%$ of the variance\\
Counted vs.\ true rates & Table~\ref{tab:count} & bound agrees to 4--5
significant figures at $2.6\%$ rate error\\
\bottomrule
\end{tabular}
\end{table}
\section{Supplementary numerical results}\label{sec:suppl}
\setcounter{table}{0}\setcounter{equation}{0}

\subsection{The point mass at $\eta=0$}\label{sec:pointmass}

Table~\ref{tab:pointmass} sizes the convention of Remark~\ref{rem:convention}.

\begin{table}[H]
\centering
\caption{Weight of the point mass at $\eta=0$, and its effect on the first two
moments. ``cond.'' denotes conditioning on $W+T_0S>0$, which is \emph{not} the
convention used in the paper and is shown only to size the effect}
\label{tab:pointmass}
\small
\begin{tabular}{lrrrrr}
\toprule
model & $\mathcal T$ & $\Pr[W{+}T_0S\le0]$ & $\langle\eta\rangle$ &
$\langle\eta\rangle$ cond. & $\Var(\eta)$ / $\Var(\eta)$ cond.\\
\midrule
three-state & 2 & 74.7\% & 0.093 & 0.366 & $3.73/4.73\times10^{-2}$\\
three-state & 4 & 60.5\% & 0.136 & 0.345 & $4.84/5.05\times10^{-2}$\\
three-state & 8 & 40.9\% & 0.193 & 0.327 & $5.62/5.13\times10^{-2}$\\
three-state & 20 & 14.7\% & 0.256 & 0.300 & $5.15/4.71\times10^{-2}$\\
two-channel & 1 & 2.66\% & 0.244 & 0.251 & $2.86/2.77\times10^{-2}$\\
two-channel & 2 & 0.16\% & 0.253 & 0.253 & $1.25/1.24\times10^{-2}$\\
\bottomrule
\end{tabular}
\end{table}

\subsection{Conditioning on disordered topologies}\label{sec:topo}

On the driven ring of Section~\ref{sec:scaling}, $\kappa(G)$ improves as the
network grows. The ring is translation invariant and its stationary
law is uniform, so its symmetry could be responsible for this behaviour. We
therefore repeated the computation on two families without any symmetry: random
3-regular graphs and open $L\times L$ grids. Each bond $e$ carries rates
$k_e e^{\pm F_e/2}$, with $\ln k_e$ drawn from $N(0,0.25)$ and $F_e$ uniform on
$[-2,2]$, so every network contains many cycles with different affinities and
its stationary law is far from uniform (the ratio of smallest to largest
stationary probability falls to $0.01$). Tilts are antisymmetric, one per bond,
so the tilt space has dimension $d=|E|$; all other settings are as for the
ring. Table~\ref{tab:topo} gives the median of $\kappa(G)$ at $m=12$ over five
independent draws of the network and the test points.

\begin{table}[H]
\centering
\caption{Conditioning of the Gram matrix at $m=12$ against the dimension $d$ of
the tilt space, for the driven ring and two disordered topologies. Medians over
five draws. The ring column is recomputed with the same protocol; it agrees in
order of magnitude with the single draws of Table~\ref{tab:scaling}}
\label{tab:topo}
\begin{tabular}{rlll}
\toprule
$d$ & ring & random 3-regular & $L\times L$ grid\\
\midrule
 6 & $1.3\times10^{3}$ & $1.5\times10^{3}$ & \\
12 & $1.7\times10^{2}$ & $2.1\times10^{2}$ & $1.8\times10^{2}$ ($L=3$)\\
24 & $1.8\times10^{1}$ & $4.2\times10^{1}$ & $2.9\times10^{1}$ ($L=4$)\\
40 &                   &                   & $7.5\times10^{0}$ ($L=5$)\\
48 & $5.1\times10^{0}$ & $8.8\times10^{0}$ & \\
\bottomrule
\end{tabular}
\end{table}

At equal $d$ the three topologies agree to within a factor of about two and a
half, with the disordered networks slightly worse conditioned than the ring, and
in all three $\kappa$ falls by a factor of $24$ to $34$ between $d=12$ and
$d\approx48$. The
improvement with network size is therefore set by the dimension of the tilt
space, not by the symmetry of the ring.

\subsection{Sensitivity to the design, and saturation}\label{sec:placement}

Section~\ref{sec:nodominate} shows that none of the three rules dominates, which raises two
questions: how much the answer depends on the design at all, and where adding
points stops paying. Neither has an analytical answer, since the Barankin
supremum over all designs is not computable, so we measure both.

One trajectory ensemble is generated and held fixed, so that the spread below is
variation over \emph{designs} and not Monte Carlo noise. Ten independent random
pools of forty tilts are drawn; points are added greedily from each and scored
on the held-out half by the protocol of Appendix~\ref{sec:protocol}. The basis always starts
with the four infinitesimal coordinate tilts, so $m=4$ reproduces multiparameter
Cram\'er--Rao ($62.7\%$ of $\Var(\eta)$ here) and everything above it is what
finite tilts add.

\begin{table}[H]
\centering
\caption{Bound as a percentage of $\Var(\eta)$ against the number of test
points, over ten independent random designs on the three-state model
($\mathcal T=8$, $N=2\times10^{5}$, edge-tilt family only)}
\label{tab:placement}
\begin{tabular}{rrrrr}
\toprule
$m$ & mean & s.d. & min & max \\
\midrule
 4 & 62.64 & 0.00 & 62.64 & 62.64 \\
 8 & 64.24 & 1.14 & 62.64 & 65.66 \\
12 & 67.05 & 0.97 & 65.62 & 68.71 \\
16 & 68.44 & 1.31 & 66.73 & 70.89 \\
20 & 69.35 & 1.21 & 67.71 & 71.31 \\
\bottomrule
\end{tabular}
\end{table}

Two things follow. The design matters, but at the scale of a few percentage
points and not tens: the standard deviation across pools settles near $1.2$
points and the full range never exceeds $4.3$. Whether our reported numbers come
from a lucky pool can be read off the min column: the worst of ten designs at
$m=20$ still beats Cram\'er--Rao by five points.

And the climb saturates. Fitting $f(m)=a-b\,e^{-cm}$ to the means gives
$a=74.1\%$, $b=15.0$, $c=0.060$, so the edge-tilt family is extrapolated to
reach about $74\%$ of $\Var(\eta)$ and no more; the fit puts $m=32$ at $71.9\%$
and $m=64$ at $73.7\%$, which is the returns-per-point argument against pushing
$m$ higher. We state this as a measured extrapolation and not a theorem. The
only rigorous ceiling remains $\Var(\eta)$ itself, and roughly a quarter of it is
not reached by this family at any $m$. That is a limitation of the family:
enlarging it with site potentials (Section~\ref{sec:sitepot}) moves the same model to $84.0\%$.

\subsection{A diagnostic that fails}\label{sec:diagnostic}

On the chain, site potentials are the largest single improvement in this work;
on the two-channel model the analogous enlargement is worth almost nothing
(Section~\ref{sec:limits}). The obvious explanation is that the extra directions fail to couple to $\eta$,
which would give a cheap \emph{a priori} test: measure the scores of the
candidate directions, and skip them if they are small. We tested it, and it is
wrong; in fact it points the wrong way. On the chain the site scores are
$\Cov_0(\eta,-\tau_i)=(-0.008,\,0.033,\,-0.025)$ against edge scores
$(0.021,\,0.128,\,0.065,\,0.018)$, a norm ratio of $0.29$, and the enlargement is
worth $+6.3$ points of Cram\'er--Rao. On the two-channel model the symmetric
directions score $(-0.042,\,0.046)$ against antisymmetric $(-0.044,\,0.067)$, a
ratio of $0.78$, nearly three times larger, and the enlargement is worth
$+0.7$ points. The direction with the \emph{larger} relative coupling delivers
the \emph{smaller} gain.

What the extra directions must supply is span that the existing directions
cannot already reach at finite tilt; coupling to $\eta$ is not enough. In the
two-channel model the symmetric directions are strongly correlated with $\eta$
and yet nearly redundant once finite antisymmetric tilts are admitted; on the
chain the dwell-time directions are weakly correlated and new. We know of no
cheap test that distinguishes these cases in advance and do not offer one: the
only procedure we can recommend is to enlarge the family and measure. We report
the failed diagnostic because it is the first thing a reader would try.

\section*{Acknowledgements}
I thank Prof. Abdessamad Ouchen for valuable discussions and explanations that helped shape this work. An earlier version of this manuscript was developed as two separate letters before being unified into the present article. 

AI tools were employed solely to assist with language editing, stylistic refinement, and code optimization.

\end{document}